\title[Covariant Hamiltonians]
{Diffeomorphism-invariant \\
Covariant Hamiltonians \\
of a pseudo-Riemannian Metric \\
and a Linear Connection }
\author[J. Mu\~noz Masqu\'e, M. E. Eugenia Rosado]
{J. Mu\~noz\ Masqu\'e$^\dag $,
M. Eugenia Rosado Mar\'{\i}a$^\ddag $}
\address{${}^\dag $Instituto de F\'{\i}sica Aplicada, CSIC \\
C/ Serrano 144, 28006-Madrid, Spain\\
${}^\ddag $Departamento de Matem\'atica Aplicada \\
Escuela T\'ecnica Superior de Arquitectura, UPM \\
Avda.\ Juan de Herrera 4,\ 28040-Madrid, Spain}
\date{}
\newtheorem{theorem}{Theorem}[section]
\newtheorem{proposition}[theorem]{Proposition}
\newtheorem{lemma}[theorem]{Lemma}
\theoremstyle{remark}
\newtheorem{remark}[theorem]{Remark}
\begin{document}

\arxurl{ }

\begin{abstract}
\noindent Let $M\to N$ (resp.\ $C\to N$) be
the fibre bundle of pseudo-Riemannian metrics
of a given signature (resp.\ the bundle of linear
connections) on an orientable connected manifold
$N$. A geometrically defined class of first-order
Ehresmann connections on the product fibre bundle
$M\times _NC$ is determined such that, for every
connection $\gamma $ belonging to this class
and every $\mathrm{Diff}N$-invariant Lagrangian
density $\Lambda $ on $J^1(M\times _NC)$,
the corresponding covariant Hamiltonian
$\Lambda ^\gamma $ is also $\mathrm{Diff}N$-invariant.
The case of $\mathrm{Diff}N$-invariant second-order
Lagrangian densities on $J^2M$ is also studied
and the results obtained are then applied
to Palatini and Einstein-Hilbert Lagrangians.
\end{abstract}

\maketitle

\cutpage 

\setcounter{page}{2}

\noindent \textit{PACS codes:\/} 02.30.Xx, 02.40.Hw,
02.40.Ma, 02.40.Vh, 04.20.Fy, 04.50.+h

\medskip

\noindent \textit{Mathematics Subject Classification 2010:\/}
Primary: 58E30; Secondary: 58A20, 58J70, 83C05

\medskip

\noindent\textit{Key words and phrases:\/} Covariant Hamiltonian
density, Hamilton-Cartan formalism, Diffeomorphism invariance,
Jet bundles, Lagrangian density, Poin\-car\'e-Cartan form.

\medskip

\noindent\textit{Acknowledgements:\/} Supported
by Ministerio de Ciencia e Innovaci\'on of Spain,
under grant \#MTM2008--01386.

\noindent
\section{Introduction}

In Mechanics, the Hamiltonian function attached
to a Lagrangian density
$\Lambda =L(t,q^i,\dot{q}^i)dt$ on $\mathbb{R}\times TQ$
is given by $H=\dot{q}^i\partial L/\partial \dot{q}^i-L$,
but---as it was early observed in \cite{GoS}---this is not
an invariant definition if an arbitrary fibred manifold
$t\colon E\to\mathbb{R}$ is considered
(thus generalizing the notion of an absolute time)
instead of the direct product bundle
$\mathbb{R}\times Q\to\mathbb{R}$; e.g., see \cite{CMS},
\cite{MSarda}, \cite{MassaPagani} for this point of view.
In this case, an Ehresmann connection is needed
in order to lift the vector field $\partial /\partial t$
from $\mathbb{R}$ to $E$, and the Hamiltonian is then
defined by applying the Poincar\'e-Cartan form attached to
$\Lambda $ to the horizontal lift of $\partial /\partial t$.

In the field theory---where no distinguished vector
field exists on the base manifold---the need of
an Ehresmann connection is even greater, in order to attach
a covariant Hamiltonian to each Lagrangian
density; e.g., see \cite[4.1]{MSh}, \cite{MSarda}, and
the definitions below.

Let $p\colon E\to N$ be an arbitrary fibred manifold
over a connected manifold $N$, $n=\dim N$, $\dim E=m+n$,
oriented by $v_n=dx^1\wedge \cdots \wedge dx^n$.
Throughout this paper, Latin (resp.\ Greek) indices run from
$1$ to $n$ (resp.\ $m$). An Ehresmann connection on
a fibred manifold $p\colon E\to N$ is a differential
$1$-form $\gamma$ on $E$ taking values in the vertical
sub-bundle $V(p)$ such that $\gamma(X)=X$ for every
$X\in  V(p)$ (e.g., see \cite{MSarda}, \cite{MSh},
\cite{Sarda}, \cite{Saunders}). Once an Ehresmann connection
$\gamma $ is given, a decomposition of vector bundles holds
$T(E)=V(p)\oplus \ker \gamma $, where $\ker \gamma $ is called
the horizontal sub-bundle determined by $\gamma $.
In a fibred coordinate system $(x^j,y^\alpha )$ for $p$,
an Ehresmann connection can be written as
\[
\gamma =(dy^\alpha +\gamma _j^\alpha dx^j)\otimes
\frac{\partial }{\partial y^\alpha },
\quad
\gamma _j^\alpha \in  C^\infty (E).
\]
According to \cite{MSh}, the covariant Hamiltonian $\Lambda ^\gamma $
associated to a Lagrangian density on $J^1E$, $\Lambda =Lv_n$,
$L\in  C^\infty (J^1E)$, with respect to $\gamma $ is the Lagrangian
density defined by,
\begin{equation}
\Lambda ^\gamma
=\left(
(p_0^1)^\ast \gamma -\theta
\right)
\wedge \omega _\Lambda -\Lambda,
\label{Lambdagamma}
\end{equation}
where, $p_0^1\colon J^1E\to J^0E=E$ is the projection mapping,
$\theta =\theta ^\alpha \otimes \partial /\partial y^\alpha $,
$\theta ^\alpha =dy^\alpha -y_i^\alpha dx^i $ is the $V(p)$-valued
$1$-form on $J^1E$ associated with the contact structure,
written on a fibred coordinate system $(x^i,y^\alpha )$,
and $\omega _\Lambda $ is the Legendre form attached
to $\Lambda $, i.e., the $V^\ast (p)$-valued $p^1$-horizontal
$(n-1)$-form on $J^1E$ given by
\[
\omega_\Lambda
=(-1)^{i-1}
\frac{\partial L}{\partial y_i^\alpha }i_{\partial/\partial x^i}
v_n\otimes dy^\alpha ,
\]
where $(x^i,y^\alpha ;y_{i}^\alpha )$ is the coordinate
system induced from $(x^i,y^\alpha )$ on the $1$-jet bundle
and $p^1\colon J^1E\to N$ is the projection
on the base manifold. Locally,
\begin{equation}
\Lambda^\gamma
=\Bigl(
\left(
\gamma _i^\alpha +y_i^\alpha
\right)
\frac{\partial L}{\partial y_i^\alpha }-L\Bigr)
dx^1\wedge\cdots\wedge dx^{n}.
\label{Lambdagamma1}
\end{equation}
From \eqref{Lambdagamma} we obtain the following decomposition
of the Poincar\'{e}-Cartan form attached to $\Lambda $
(e.g., see \cite{Gotay}, \cite{MSarda}, \cite{MP}):
$\Theta_\Lambda=\theta\wedge\omega_\Lambda +\Lambda
=(p_0^1)^\ast \gamma\wedge\omega_\Lambda-\Lambda^\gamma $.

A diffeomorphism $\Phi\colon E\to E$ is said to be
an automorphism of $p$ if there exists
$\phi\in \mathrm{Diff}N$ such that
$p\circ\Phi=\phi\circ p$. The set of such automorphisms
is denoted by $\mathrm{Aut}(p)$ and its Lie
algebra is identified to the space
$\mathrm{aut}(p)\subset\mathfrak{X}(E)$ of $p$-projectable
vector fields on $E$. Given a subgroup
$\mathcal{G}\subseteq\mathrm{Aut}(p)$,
a Lagrangian density $\Lambda $ is said to be
$\mathcal{G}$-invariant if $(\Phi^{(1)})^\ast \Lambda=\Lambda $
for every $\Phi\in \mathcal{G}$, where
$\Phi^{(1)}\colon J^1E\to J^1E$ denotes the
$1$-jet prolongation of $\Phi$. Infinitesimally,
the $\mathcal{G}$-invariance equation can be reformulated as
$L_{X^{(1)}}\Lambda=0$ for every
$X\in \mathrm{Lie}(\mathcal{G})$, $X^{(1)}$ denoting the
$1$-jet prolongation of the vector field $X$.

When a group $\mathcal{G}$ of transformations of $E$ is given,
a natural question arises:

\begin{itemize}
\item Determine a class---as small as possible---
of Ehresmann connections $\gamma$ such that
$\Lambda^\gamma $ is $\mathcal{G}$-invariant for every
$\mathcal{G}$-invariant Lagrangian density $\Lambda $.
\end{itemize}

Below we tackle this question in the framework
of General Relativity, i.e., the group $\mathcal{G}$
is the group of all diffeomorphisms of the ground
manifold $N$ acting in a natural way either on
the bundle of pseudo-Riemannian metrics
$p_M\colon M=M(N)\to N$ of a given signature
$(n^{+},n^{-})$, $n^{+}+n^{-}=n$, or on the product bundle
$p\colon M\times _NC\to N$, where $p_C\colon C=C(N)\to N$
is the bundle of linear connections on $N$.
Namely, we solve the following two problems:

\begin{description}
\item[(P)] Determine a class---as small as possible---
of Ehresmann connections $\gamma$ such that for every
$\mathrm{Diff}N$-invariant first-order Lagrangian
density $\Lambda $ on the bundle $J^1(M\times _NC)$ ,
the corresponding covariant Hamiltonian
$\Lambda^\gamma $ is also $\mathrm{Diff}N$-invariant.
\end{description}

Similarly to the problem \textbf{(P)},
we formulate the corresponding problem on $J^2M$ as follows:

\begin{description}
\item[(P2)] Determine a class of second-order Ehresmann
connections $\gamma^2$ on $M$ such that for every
$\mathrm{Diff}N$-invariant second-order Lagrangian density
$\Lambda $ on the bundle $J^2M$, the corresponding
covariant Hamiltonian $\Lambda^{\gamma^2}$---defined in
\eqref{Lambda_gamma_2}---is also $\mathrm{Diff}N$-invariant.
\end{description}
\noindent Essentially, a class of first-order Ehresmann connections
on the bundle $M\times _NC$ is obtained, defined by the conditions
$(C_M)$ and $(C_C)$ below (see Propositions \ref{proposition0}
and \ref{proposition3}), solving the problem \textbf{(P)}. This class
of connections also helps to solve \textbf{(P2)} by means of a natural
isomorphism between $J^1M$ and $M\times _NC^{\mathrm{sym}}$, where
$C^{\mathrm{sym}}$ denotes the sub-bundle of symmetric connections
on $N$ (cf.\ Theorem \ref{one-to-one}). Finally, this approach
is applied to Palatini and Einstein-Hilbert Lagrangians (\cite{BurtonMann1},
\cite{BurtonMann2}), obtaining results compatible with their usual
Hamiltonian formalisms.

\section{Invariance under diffeomorphisms}

\subsection{Preliminaries}

\subsubsection{Jet-bundle notations}

Let $p^k\colon J^kE\to N$ be the $k$-jet bundle of local sections
of an arbitrary fibred manifold $p\colon E\to N$, with
projections $p_l^k\colon J^kE$ $\to J^lE$,
$p_l^k(j_x^ks)=j_x^ls$, for $k\geq l$, $j_x^ks$ denoting the $k$-jet at
$x$ of a section $s$ of $p$ defined on a neighbourhood of $x\in  N$.

A fibred coordinate system $(x^i,y^\alpha )$ on $V$ induces a coordinate
system $(x^i,y_{I}^\alpha )$, $I=(i_1,\dotsc,i_N)\in \mathbb{N}^n$,
$0\leq|I|=i_{1}+\cdots+i_N\leq r$, on $(p_0^r)^{-1}(V)=J^rV$ as
follows: $y_I^\alpha (j_x^r s)
=(\partial^{|I|}(y^\alpha \circ s)/\partial x^I)(x)$,
with $y_0^\alpha =y^\alpha $.

Every morphism $\Phi\colon E\to E^\prime $ whose associated map
$\phi\colon N\to N^\prime $ is a diffeomorphism, induces a map
\begin{equation}
\begin{array}
[c]{l}
\Phi^{(r)}\colon J^rE\to J^rE^\prime ,\\
\Phi^{(r)}(j_x^rs)=j_{\phi(x)}^{r}(\Phi\circ s\circ\phi^{-1}).
\end{array}
\label{Phi^r}
\end{equation}
If $\Phi _t$ is the flow of a vector field $X\in \mathrm{aut}(p)$,
then $\Phi _t^{(r)}$ is the flow of a vector field
$X^{(r)}\in \mathfrak{X}(J^rE)$, called the infinitesimal
contact transformation of order $r$
associated to the vector field $X$. The mapping
\[
\mathrm{aut}(p)\ni X\mapsto X^{(r)}\in \mathfrak{X}(J^rE)
\]
is an injection of Lie algebras, namely, one has
\[
\begin{array}
[c]{l}
(\lambda X+\mu Y)^{(r)}=\lambda X^{(r)}+\mu Y^{(r)},\smallskip \\
\lbrack X,Y]^{(r)}=[X^{(r)},Y^{(r)}], \smallskip \\
\forall \lambda ,\mu \in \mathbb{R},\;\forall X,Y\in \mathrm{aut}(p).
\end{array}
\]
In particular, for $r=1$,
\begin{align*}
X  & =u^i\frac{\partial}{\partial x^i}
+v^\alpha \frac{\partial}{\partial y^\alpha },
\quad
u^i\in  C^\infty (N),v^\alpha \in  C^\infty (E),\\
X^{(1)}
& =u^i\frac{\partial}{\partial x^i}
+v^\alpha \frac{\partial }{\partial y^\alpha }
+v_{i}^\alpha \frac{\partial}{\partial y_{i}^\alpha },
\quad
v_{i}^\alpha =\frac{\partial v^\alpha }{\partial x^i}
+y_{i}^{\beta}\frac{\partial v^\alpha }{\partial y^{\beta}}
-y_{k}^\alpha \frac{\partial u^k}{\partial x^i}.
\end{align*}

\subsubsection{Coordinates on $M(N)$, $F(N)$,
$C(N)$}\label{Coordinate Systems}

Every coordinate system $(x^i)$ on an open domain $U\subseteq N$
induces the following coordinate systems:

\begin{enumerate}
\item $(x^i,y_{jk})$ on $(p_M)^{-1}(U)$, where $p_M\colon M \to N$
is the bundle of metrics of a given signature, and the functions
$y_{jk}=y_{kj}$ are defined by,
\begin{equation}
g_x=\sum _{i\leq j}y_{ij}(g_x)(dx^i)_x\otimes(dx^j)_x,\;
\forall
g_x\in (p_M)^{-1}(U).
\label{inducedsystemM(N)}
\end{equation}

\item $(x^i,x_j^i)$ on $(p_F)^{-1}(U)$, where
$p_F\colon F(N)\to N$ is the bundle of linear frames
on $N$, and the functions $x_j^i$ are defined by,
\[
u=\left(
(\partial /\partial x^1)_x,\dotsc,(\partial/\partial x^n)_x
\right)
\cdot
\left(  x_j^i(u)
\right) ,
\;x=p_F(u),\forall u\in (p_F)^{-1}(U),
\]
or equivalently,
\begin{equation}
u=(X_1,\dotsc,X_N)\in  F_x(N),\;X_j=x_j^i(u)
\left(
\frac{\partial }{\partial x^i}
\right) _x,\quad1\leq j\leq n.
\label{frame}
\end{equation}

\item $(x^i,A_{kl}^j)$ on $(p_C)^{-1}(U)$, where $p_C\colon C\to N$
is the bundle of linear connections on $N$, and the functions
$A_{kl}^j$ are defined as follows. We first recall some basic facts.
Connections on $F(N)$ (i.e., linear connections of $N$)
are the splittings of the Atiyah sequence (cf.\ \cite{Bruzzo}),
\[
0\to\mathrm{ad}F(N)\to T_{Gl(n,\mathbb{R})}F(N)
\overset{(p_{F})_\ast }{-\!\!\!\!-\!\!\!\longrightarrow}
TN \to 0,
\]
where

\begin{enumerate}
\item $\mathrm{ad}F(N)=T^\ast N\otimes TN$ is the adjoint bundle,

\item $T_{Gl(n,\mathbb{R})}(F(N))=T(F(N))/Gl(n,\mathbb{R})$, and

\item $\mathrm{gau}F(N)=\Gamma(N,\mathrm{ad}F(N))$
is the gauge algebra of $F(N)$.
\end{enumerate}
\noindent
We think of $\mathrm{gau}F(N)$ as the `Lie algebra'
of the gauge group $\mathrm{Gau}F(N)$. Moreover,
$p_C\colon C\to N$ is an affine bundle modelled
over the vector bundle $\otimes^2T^\ast N\otimes TN$.
The section of $p_C$ induced tautologically by the
linear connection $\Gamma$ is denoted by
$s_\Gamma\colon N\to C$. Every $B\in \mathfrak{gl}(n,\mathbb{R})$
defines a one-parameter group
$\varphi_t^B\colon U\times Gl(n,\mathbb{R})\to U\times Gl(n,\mathbb{R})$
of gauge transformations by setting (cf.\ \cite{CM}),
$\varphi_t^{B}(x,\Lambda)=(x,\exp(tB)\cdot\Lambda)$.
Let us denote by $\bar{B}\in \mathrm{gau}(p_F)^{-1}(U)$
the corresponding infinitesimal generator. If $(E_j^i)$
is the standard basis of $\mathfrak{gl}(n,\mathbb{R})$, then
$\bar{E}_j^i=\sum_{h=1}^nx_h^j\partial/\partial x_h^i$,
for $i,j=1,\dotsc,n$, is a basis of $\mathrm{gau}(p_F)^{-1}(U)$.
Let $\tilde{E}_j^i=\bar{E}_j^i\operatorname{mod}G$ be the class
of $\bar{E}_j^i$ on $\mathrm{ad}F(N)$. Unique smooth functions
$A_{jk}^i$ on $(p_C)^{-1}(U)$ exist such that,
\begin{align}
s_\Gamma
\left(
\frac{\partial}{\partial x^j}
\right)
& =\frac{\partial }{\partial x^j}
-(A_{jk}^i\circ\Gamma)\tilde{E}_k^i
\label{inducedsystemC(N)}\\
& =\frac{\partial}{\partial x^j}
-(A_{jk}^i\circ\Gamma)x_h^k
\frac{\partial}{\partial x_h^i},
\nonumber
\end{align}
for every $s_\Gamma$ and
$A_{jk}^i(\Gamma _x)=\Gamma _{jk}^i(x)$, where
$\Gamma _{jk}^i$ are the Christoffel symbols
of the linear connection $\Gamma $ in the coordinate
system $(x^i)$, see \cite[III, Poposition 7.4]{KN}.
\end{enumerate}

\subsection{Natural lifts\label{natural_lifts}}

Let $f_M\colon M\to M$, cf.\ \cite{MV} (resp.\
$\tilde{f}\colon F(N)\to F(N)$, cf.\ \cite[p. 226]{KN})
be the natural lift of $f\in \mathrm{Diff}N$
to the bundle of metrics (resp.\ linear frame bundle);
namely $f_M(g_x)=(f^{-1})^\ast g_x$ (resp.\
$\tilde{f}(X_1,\dotsc,X_N)=(f_\ast X_1,\dotsc,f_\ast X_N)$,
where $(X_1,\dotsc,X_N)\in  F_x(N)$); hence
$p_M\circ f_M=f\circ p_M$
(resp.\ $p_F\circ\tilde{f}=f\circ p_F$), and $f_M\colon M\to M$
(resp.\ $\tilde{f}\colon F(N)\to F(N)$) have a natural extension to jet
bundles $f_M^{(r)}\colon J^r(M)\to J^r(M)$ (resp.\
$\tilde {f}^{(r)}\colon J^r(FN)\to J^r(FN)$) as defined in the formula
\eqref{Phi^r}, i.e.,
\[
f_M^{(r)}
\left(
j_x^rg
\right)
=j_{f(x)}^r(f_M\circ g\circ f^{-1})\quad
\text{(resp.\ }\tilde{f}^{(r)}
\left(
j_x^rs
\right)
=j_{f(x)}^r(\tilde{f}\circ s\circ f^{-1})
\text{)}.
\]
As $\tilde{f}$ is an automorphism of the principal
$Gl(n,\mathbb{R})$-bundle $F(N)$, it acts on linear connections
by pulling back connection forms, i.e.,
$\Gamma^\prime =\tilde{f}
\left(
\Gamma
\right) $ where
$\omega _{\Gamma^\prime }=(\tilde{f}^{-1})^\ast \omega_{\Gamma}$
(see\ \cite[II, Proposition\ 6.2-(b)]{KN}, \cite[3.3]{CM}).
Hence, there exists a unique diffeomorphism
$\tilde{f}_C\colon C\to C$ such that,

1) $p_C\circ\tilde{f}_C=f\circ p_C$, and

2) $\tilde{f}_C\circ s_{\Gamma}=s_{\tilde{f}\left(  \Gamma\right)  }$
for every linear connection $\Gamma$.

If $f_t$ is the flow of a vector field $X\in \mathfrak{X}(N)$, then the
infinitesimal generator of $(f_{t})_M$ (resp.\ $\tilde{f}_t$,
resp.\ $(\tilde{f}_{t})_C$) in $\mathrm{Diff}M$ (resp.\ $\mathrm{Diff}F(N)$,
resp.\ $\mathrm{Diff}C$) is denoted by $X_M$ (resp.\ $\tilde{X}$, resp.\
$\tilde{X}_C$) and the following Lie-algebra homomorphisms are obtained:
\[
\left\{
\begin{array}
[c]{ll}
\mathfrak{X}(N)\to\mathfrak{X}(M),
& X\mapsto X_M\\
\mathfrak{X}(N)\to\mathfrak{X}(F(N)),
& X\mapsto \tilde{X}\\
\mathfrak{X}(N)\to \mathfrak{X}(C),
& X\mapsto \tilde{X}_C
\end{array}
\right.
\]

If $X=u^i\partial/\partial x^i\in \mathfrak{X}(N)$
is the local expression for $X$, then

\begin{enumerate}
\item From \cite[eqs. (2)--(4)]{MV} we know that the natural lift
of $X$ to $M$ is given by,
\[
X_M=u^i\frac{\partial}{\partial x^i}-\sum_{i\leq j}
\left(
\frac{\partial u^h}{\partial x^i}y_{hj}
+\frac{\partial u^h}{\partial x^j}y_{ih}
\right)
\frac{\partial}{\partial y_{ij}}\in \mathfrak{X}(M).
\]
and its $1$-jet prolongation,
\begin{multline*}
\!\!\!\!\!\!\!X_M^{(1)}=u^i\frac{\partial}{\partial x^i}
-\sum_{i\leq j}
\left(
\!\frac{\partial u^h}{\partial x^i}y_{hj}
+\frac{\partial u^h}{\partial x^j}y_{hi}\!
\right)
\frac{\partial}{\partial y_{ij}}\\
\!\! -\sum_{i\leq j}\!
\left(
\!\frac{\partial^2u^h}{\partial x^i\partial x^k}y_{hj}
+\frac{\partial^2u^h}{\partial x^j\partial x^k}y_{hi}
+\frac{\partial u^h}{\partial x^i}y_{hj,k}
+\frac{\partial u^h}{\partial x^j}y_{hi,k}
+\frac{\partial u^h}{\partial x^k}y_{ij,h}
\!
\right)
\!\frac{\partial}{\partial y_{ij,k}}.
\end{multline*}

\item From \cite[Proposition 3]{EM} (also see \cite[VI, Proposition 21.1]{KN})
we know that the natural lift of $X$ to $F(N)$ is given by,
\[
\tilde{X}=u^i\frac{\partial}{\partial x^i}
+\frac{\partial u^i}{\partial x^l}x_j^l
\frac{\partial}{\partial x_j^i},
\]
and its $1$-jet prolongation,
\begin{align*}
\tilde{X}^{(1)}
& =u^i\frac{\partial}{\partial x^i}
+\frac{\partial u^i}{\partial x^l}x_j^l
\frac{\partial}{\partial x_j^i}
+v_{jk}^i\frac{\partial}{\partial x_{j,k}^i},\\
v_{jk}^i
& =\frac{\partial u^i}{\partial x^l}x_{j,k}^l
-\frac{\partial u^l}{\partial x^k}x_{j,l}^i
+\frac{\partial^2u^i}{\partial x^k\partial x^l}x_j^l.
\end{align*}

\item Finally,
\[
\tilde{X}_C=u^i\frac{\partial}{\partial x^i}
-\left(
\frac{\partial ^2u^i}{\partial x^j\partial x^k}
-\frac{\partial u^i}{\partial x^l}A_{jk}^l
+\frac{\partial u^l}{\partial x^k}A_{jl}^i
+\frac{\partial u^l}{\partial x^j}A_{lk}^i
\right)  \frac{\partial}{\partial A_{jk}^i},
\]

\end{enumerate}

\begin{align}
\tilde{X}_C^{(1)}
& =u^i\frac{\partial}{\partial x^i}+w_{jk}^i
\frac{\partial}{\partial A_{jk}^i}+w_{jkh}^i
\frac{\partial}{\partial A_{jk,h}^i},
\nonumber\\
w_{jk}^i
& =-\frac{\partial^2u^i}{\partial x^j\partial x^k}
+\frac{\partial u^i}{\partial x^l}A_{jk}^l
-\frac{\partial u^l}{\partial x^k}A_{jl}^i
-\frac{\partial u^l}{\partial x^j}A_{lk}^i,
\label{3_2}\\
w_{jkh}^i
& =-\frac{\partial^3u^i}
{\partial x^h\partial x^j\partial x^k}
+\frac{\partial^2u^i}{\partial x^h\partial x^l}A_{jk}^l
-\frac{\partial^2u^l}{\partial x^h\partial x^k}A_{jl}^i
-\frac{\partial^2u^l}{\partial x^h\partial x^j}A_{lk}^i
\label{3_3}\\
& +\frac{\partial u^i}{\partial x^l}A_{jk,h}^l
-\frac{\partial u^l}{\partial x^k}A_{jl,h}^i
-\frac{\partial u^l}{\partial x^j}
A_{lk,h}^i-\frac{\partial u^l}{\partial x^h}A_{jk,l}^i.\nonumber
\end{align}

Let $p\colon M\times _NC\to N$ be the natural projection.

We denote by $\bar{f}=(f_M,\tilde{f}_C)$ (resp.\
$\bar{X}=(X_{M},\tilde{X}_C)\in \mathfrak{X}(M\times _NC)$)
the natural lift of $f$ (resp.\ $X$)
to $M\times _NC$. The prolongation to the bundle
$J^1(M\times _NC)$ of $\bar{X}$ is as follows:
\begin{align}
\quad\bar{X}^{(1)}\!\!
& =\!\!
\left(
\!X_{M}^{(1)},\tilde{X}_C^{(1)}\!
\right)
\label{X_bar^1}\\
\!\!  & =\!\!u^i\frac{\partial}{\partial x^i}
+\sum_{i\leq j}v_{ij}
\frac{\partial}{\partial y_{ij}}
+\sum_{i\leq j}v_{ijk}\frac{\partial}{\partial y_{ij,k}}
+w_{jk}^i\frac{\partial}{\partial A_{jk}^i}
+w_{jkh}^i \frac{\partial}{\partial A_{jk,h}^i},
\nonumber
\end{align}
where
\begin{align}
v_{ij}\!\!\!
& =\!\!\!-\frac{\partial u^h}{\partial x^i}y_{hj}
-\frac{\partial u^h}{\partial x^j}y_{hi},
\label{vij}\\
v_{ijk}\!\!\!
& =\!\!\!-\frac{\partial^2u^h}{\partial x^i\partial x^k}y_{hj}
-\frac{\partial^2u^h}{\partial x^j\partial x^k}y_{hi}
-\frac{\partial u^h}{\partial x^i}y_{hj,k}
-\frac{\partial u^h}{\partial x^j}y_{hi,k}
-\frac{\partial u^h}{\partial x^k}y_{ij,h}
,\label{vijk}
\end{align}
and $w_{jk}^i,w_{jkh}^i$ are given in the formulas \eqref{3_2},
\eqref{3_3}, respectively.

\subsection{$\mathrm{Diff}N$- and $\mathfrak{X}(N)$-invariance\label{invariance}}

A differential form $\omega_{r}\in \Omega^{r}(J^1(M\times _NC))$,
$r\in \mathbb{N}$, is said to be $\mathrm{Diff}N$-invariant---
or invariant under diffeomorphisms---
(resp.\ $\mathfrak{X}(N)$-invariant) if the following
equation holds: $(\bar{f}^{(1)})^\ast \omega_{r}=\omega_{r}$,
$\forall f\in \mathrm{Diff}N$ (resp.\
$L_{\bar{X}^{(1)}}\mathcal{\omega}_{r}=0$,
$\forall X\in \mathfrak{X}(N)$). Obviously,
``$\mathrm{Diff}N$-invariance'' implies
``$\mathfrak{X}(N)$-invariance'' and the converse
is almost true (see \cite{GM2}, \cite{MR}). Because
of this, below we consider $\mathfrak{X}(N)$-invariance only.

A linear frame $(X_1,\dotsc,X_N)$ at $x$ is said to
be orthonormal with respect to $g_x\in  M_x(N)$
(or simply $g_x$-orthonormal) if $g_x(X_i,X_j)=0$
for $1\leq i<j\leq n$, $g(X_i,X_i)=1$
for $1\leq i\leq n^+$, $g(X_i,X_i)=-1$
for $n^{+}+1\leq i\leq n$.

As $N$ is an oriented manifold, there exists a unique
$p$-horizontal $n$-form
$\mathbf{v}$ on $M\times _NC$ such that,
$\mathbf{v}_{(g_x,\Gamma _x)}\left( X_{1},\dotsc,X_N\right) =1$,
for every $g_x$-orthonormal basis
$(X_{1},\dotsc,X_N)$ belonging to the orientation of $N$.
Locally $\mathbf{v}=\rho v_n$, where
$\rho=\sqrt{(-1)^{n^{-}}\det(y_{ij})}$ and
$v_n=dx^1\wedge\cdots\wedge dx^n$. As proved in \cite[Proposition 7]{MV},
the form $\mathbf{v}$ is $\mathrm{Diff}N$-invariant and hence
$\mathfrak{X}(N)$-invariant. A Lagrangian density
$\Lambda $ on $J^1(M\times _NC)$ can be globally written as
$\Lambda=\mathcal{L}\mathbf{v}$ for
a unique function $\mathcal{L}\in  C^\infty (J^1(M\times _NC))$
and $\Lambda $ is $\mathfrak{X}(N)$-invariant if and only if
the function $\mathcal{L}$ is. Therefore, the invariance of
Lagrangian densities is reduced to that of scalar functions.

\begin{proposition}
\label{proposition1}A function
$\mathcal{L}\in  C^\infty (J^1(M\times _NC))$ is
$\mathfrak{X}(N)$-invariant if and only if
the following system of partial differential equations hold:
\begin{equation}
\begin{array}
[c]{ll}
0=X^i(\mathcal{L}), & \forall i,\\
0=X_h^i\left(  \mathcal{L}\right)  , & \forall h,i,\\
0=X_h^{ik}\left(  \mathcal{L}\right)  , & \forall h,i\leq k,\\
0=X_i^{jkh}\left(  \mathcal{L}\right)  , & \forall i,j\leq k\leq h,
\end{array}
\label{S}
\end{equation}
where{\small
\[
X^i=\frac{\partial}{\partial x^i},\;\forall i,
\]
\begin{align}
X_{h}^i
& =-y_{hi}\dfrac{\partial}{\partial y_{ii}}-y_{hj}
\dfrac{\partial }{\partial y_{ij}}
-y_{ih,k}\dfrac{\partial}{\partial y_{ii,k}}-y_{hj,k}
\dfrac{\partial}{\partial y_{ij,k}}
-\sum_{s\leq j}y_{sj,h}
\dfrac{\partial }{\partial y_{sj,i}}
\nonumber\\
& +A_{jk}^i\dfrac{\partial}{\partial A_{jk}^h}
-A_{jh}^r\dfrac{\partial }{\partial A_{ji}^r}
-A_{hk}^r\dfrac{\partial}{\partial A_{ik}^r}
\nonumber\\
& +A_{jk,s}^i\dfrac{\partial}{\partial A_{jk,s}^h}
-A_{jh,r}^s\dfrac{\partial}{\partial A_{ji,r}^s}
-A_{hk,r}^s\dfrac{\partial}{\partial A_{ik,r}^s}
-A_{jk,h}^r\dfrac{\partial}{\partial A_{jk,i}^r},
\;
\forall
h,i,\nonumber
\end{align}
\begin{align}
X_h^{ik}
& =-y_{ih}\dfrac{\partial}{\partial y_{ii,k}}
-y_{kh}\dfrac{\partial}{\partial y_{kk,i}}
-y_{hj}\dfrac{\partial}{\partial y_{ij,k}}
-y_{hj}\dfrac{\partial}{\partial y_{kj,i}}
-\dfrac{\partial}{\partial A_{ik}^h}
-\dfrac{\partial}{\partial A_{ki}^h}
\label{Xikh}\\
& +A_{js}^k\dfrac{\partial}{\partial A_{js,i}^h}
-A_{jh}^s\dfrac {\partial}{\partial A_{jk,i}^{s}}
-A_{hr}^s\dfrac{\partial}{\partial
A_{kr,i}^s}\nonumber\\
& +A_{js}^i\dfrac{\partial}{\partial A_{js,k}^h}
-A_{jh}^s\dfrac{\partial}{\partial A_{ji,k}^{s}}
-A_{hr}^s\dfrac{\partial}{\partial A_{ir,k}^{s}},
\;
\forall h,i\leq k,
\nonumber
\end{align}
\begin{equation}
X_{i}^{jkh}=\frac{\partial}{\partial A_{jk,h}^i}
+\frac{\partial}{\partial A_{jh,k}^i}
+\frac{\partial}{\partial A_{hk,j}^i}
+\frac{\partial}{\partial A_{hj,k}^i}
+\frac{\partial}{\partial A_{kj,h}^i}
+\frac{\partial}{\partial A_{kh,j}^i},
\;
\forall i,h\leq j\leq k.
\label{Xijkh}
\end{equation}
}
Moreover, the vector fields $X^i,X_h^i,X_h^{ik},X_i^{jkh}$ are
linearly independent and they span an involutive distribution on
$J^1(M\times _NC)$ of rank $n\tbinom{n+3}{3}$. Hence, the number of
functionally invariant Lagrangians on $J^1(M\times _NC)$ is
\[
\tfrac{1}{6}
\left(
 5n^{4}+3n^3-5n^2+3n
\right) .
\]

\end{proposition}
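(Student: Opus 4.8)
The plan is to reduce $\mathfrak{X}(N)$-invariance to a linear first-order system, read off its symbol, and then study the distribution spanned by the resulting vector fields. Since $\mathcal{L}$ is a function, $L_{\bar{X}^{(1)}}\mathcal{L}=\bar{X}^{(1)}(\mathcal{L})$, so $\mathcal{L}$ is $\mathfrak{X}(N)$-invariant if and only if $\bar{X}^{(1)}(\mathcal{L})=0$ for every $X=u^i\partial/\partial x^i$. First I would substitute the explicit field \eqref{X_bar^1}, with coefficients \eqref{vij}, \eqref{vijk}, \eqref{3_2}, \eqref{3_3}. The only derivatives of $X$ occurring there are the values $u^i$, $\partial u^i/\partial x^h$, $\partial^2u^i/\partial x^h\partial x^k$, $\partial^3u^i/\partial x^h\partial x^k\partial x^j$ at the base point $x=p^1(p)$ (the lift is of order $3$ in $X$). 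Regrouping $\bar{X}^{(1)}(\mathcal{L})$ according to these quantities displays it in the form $u^iX^i(\mathcal{L})+(\partial u^h/\partial x^i)X^i_h(\mathcal{L})+(\partial^2u^h/\partial x^i\partial x^k)X^{ik}_h(\mathcal{L})+(\partial^3u^i/\partial x^j\partial x^k\partial x^h)X^{jkh}_i(\mathcal{L})$, and matching each partial derivative against the coefficient fields identifies the latter as exactly $X^i$, $X^i_h$, $X^{ik}_h$, $X^{jkh}_i$; the separation of the diagonal terms $\partial/\partial y_{ii}$ and the symmetrized expressions \eqref{Xikh}, \eqref{Xijkh} reflect the symmetry $y_{ij}=y_{ji}$ and the symmetry of the higher derivatives of $u$. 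This identification is a routine but lengthy index bookkeeping, and I would carry it out one jet order at a time.

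The equivalence with the system \eqref{S} then follows from a free-jet argument: by Borel's lemma any prescribed value, and any prescribed (symmetric) first, second and third derivative arrays at a fixed $x$ are realized by some $X\in\mathfrak{X}(N)$, and these data are mutually independent. Hence $\bar{X}^{(1)}(\mathcal{L})=0$ for \emph{all} $X$ forces each coefficient to vanish separately, which is precisely \eqref{S}; conversely \eqref{S} clearly implies invariance.

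For the ``moreover'' part I would first count the generators: there are $n$ fields $X^i$, $n^2$ fields $X^i_h$, $n\binom{n+1}{2}$ fields $X^{ik}_h$ ($i\le k$), and $n\binom{n+2}{3}$ fields $X^{jkh}_i$ ($j\le k\le h$), and since $1+n+\binom{n+1}{2}+\binom{n+2}{3}$ is the number of monomials of degree at most $3$ in $n$ indeterminates, namely $\binom{n+3}{3}$, the total is $n\binom{n+3}{3}$. The crux is \emph{pointwise} linear independence, which I regard as the main obstacle. I would argue by a triangular structure graded by jet order: the $\partial/\partial x^i$-components isolate the $X^i$ with an identity symbol, so these are independent and independent of the rest; the remaining generators are $p^1$-vertical, and one disentangles the first-, second- and third-order families using the $\partial/\partial y_{ij}$-, $\partial/\partial A^i_{jk}$- and $\partial/\partial A^i_{jk,h}$-components respectively. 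Here the non-degeneracy of the metric $(y_{ij})$ (invertible, of signature $(n^+,n^-)$) together with the affine structure of $C$ is what makes the relevant symbol blocks injective and yields independence at \emph{every} point. Consequently $\mathcal{D}:=\mathrm{span}\{X^i,X^i_h,X^{ik}_h,X^{jkh}_i\}$ has constant rank $n\binom{n+3}{3}$.

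Involutivity I would deduce structurally rather than by computing brackets. By the free-jet argument $\mathcal{D}_p=\{\bar{X}^{(1)}_p:X\in\mathfrak{X}(N)\}$, and the family $\{\bar{X}^{(1)}\}$ is closed under Lie bracket because $X\mapsto\bar{X}^{(1)}$ is a homomorphism, $[\bar{X}^{(1)},\bar{Y}^{(1)}]=\overline{[X,Y]}{}^{(1)}$. A constant-rank distribution spanned pointwise by a bracket-closed family is involutive: choosing polynomial fields $X_{(1)},\dots,X_{(r)}$ whose $3$-jets at $x$ form a basis of the jet-parameter space, the $\bar{X}_{(a)}^{(1)}$ are a local frame of $\mathcal{D}$ and their brackets lie again in $\{\bar{X}^{(1)}\}\subset\Gamma(\mathcal{D})$. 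Finally, by Frobenius the number of functionally independent invariants equals the codimension $\dim J^1(M\times_NC)-\operatorname{rank}\mathcal{D}$. Using
\[
\dim J^1(M\times_NC)=n+\tbinom{n+1}{2}(1+n)+n^3(1+n)
=\tfrac{1}{2}\left(2n^4+3n^3+2n^2+3n\right)
\]
and subtracting $n\binom{n+3}{3}=\tfrac{1}{6}(n^4+6n^3+11n^2+6n)$ gives $\tfrac{1}{6}(5n^4+3n^3-5n^2+3n)$, as claimed.
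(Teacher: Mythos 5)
The reduction of invariance to the system \eqref{S} is exactly the paper's argument: expand $\bar X^{(1)}(\mathcal L)=0$ via \eqref{X_bar^1}, \eqref{vij}--\eqref{3_3}, group by the values of $u^h$ and its first, second and third derivatives at the base point, and use that the $3$-jet of $X$ at $x$ can be prescribed arbitrarily. The generator count, the involutivity argument via $[\bar X^{(1)},\bar Y^{(1)}]=\overline{[X,Y]}^{(1)}$, and the final dimension arithmetic also agree with the paper.

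The gap is in the independence step, which you correctly single out as the crux. Your ``triangular symbol'' argument assigns to the family $X_h^i$ the block of $\partial/\partial y_{ij}$-components, and claims non-degeneracy of $(y_{ij})$ makes it injective. It does not: for a combination $\lambda_i^hX_h^i$ these components vanish exactly when the matrix $\mu_{ab}=y_{ah}\lambda_b^h$ is antisymmetric, so the kernel of that block is a copy of $\mathfrak{so}(n^+,n^-)$, of dimension $\tbinom{n}{2}$, at \emph{every} point. To kill it one must invoke the $A_{jk}^i$-, $y_{ij,k}$- and $A_{jk,h}^i$-components of $X_h^i$ as well --- and even then pointwise independence genuinely fails: at a point with $y_{ab,c}=A_{bc}^a=A_{bc,d}^a=0$ (the $1$-jet of a constant-coefficient metric together with the flat connection) all those components vanish, so any $\lambda$ with $g\lambda$ antisymmetric gives $\lambda_i^hX_h^i=0$ there; for $n=2$, $g=\mathrm{id}$, one finds $X_2^1=X_1^2=-\partial/\partial y_{12}$ at such a point. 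Hence the span does \emph{not} have constant rank $n\tbinom{n+3}{3}$, and the premises of your involutivity lemma and of the Frobenius count are not met as stated. What the paper actually establishes (and what is needed) is independence over the ring $C^\infty(J^1(M\times_NC))$ --- a putative functional relation \eqref{LinearCombination} is annihilated by evaluating successively on the coordinate functions $x^a$, $y_{ab}$, $A_{bc}^a$, $A_{bc,d}^a$ --- so that the rank is $n\tbinom{n+3}{3}$ only generically, and the count of functionally independent invariants is to be read on the dense open set of maximal rank. (Even in the paper's version, the conclusion $\lambda_b^a=0$ from the $y_{ab}$-components alone is too quick for the same reason; the other components of $X_h^i$ must be used.) As it stands, your proposed proof of pointwise independence would fail, and the statement it aims at is false at the flat jets.
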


\begin{proof}
According to the formula \eqref{X_bar^1},
$\mathcal{L}$ is invariant if and only if,
\[
\begin{array}
[c]{l}
u^i\dfrac{\partial\mathcal{L}}{\partial x^i}
+\sum_{i\leq j}v_{ij}
\dfrac{\partial\mathcal{L}}{\partial y_{ij}}
+\sum_{i\leq j}v_{ijk}
\dfrac{\partial\mathcal{L}}{\partial y_{ij,k}}
+w_{jk}^i\dfrac{\partial \mathcal{L}}{\partial A_{jk}^i}
+w_{jkh}^i
\dfrac{\partial\mathcal{L}}{\partial A_{jk,h}^i}
=0,
\smallskip\\
\forall u^i\in  C^\infty (N),
\end{array}
\]
and expanding on this equation by using the formulas
\eqref{vij}, \eqref{vijk}, \eqref{3_2}, and \eqref{3_3}
we obtain
\begin{align*}
0  & =u^i\dfrac{\partial\mathcal{L}}{\partial x^i}\\
& +\frac{\partial u^h}{\partial x^i}
\left(
-y_{hi}\dfrac{\partial \mathcal{L}}{\partial y_{ii}}
-y_{hj}\dfrac{\partial\mathcal{L}}{\partial y_{ij}}
-y_{ih,k}\dfrac{\partial\mathcal{L}}{\partial y_{ii,k}}
-y_{hj,k}\dfrac{\partial\mathcal{L}}{\partial y_{ij,k}}
\right. \\
& -\sum_{s\leq j}y_{sj,h}
\dfrac{\partial\mathcal{L}}{\partial y_{sj,i}}
+A_{jk}^i\dfrac{\partial\mathcal{L}}{\partial A_{jk}^h}
-A_{jh}^r\dfrac{\partial\mathcal{L}}{\partial A_{ji}^r}
-A_{hk}^r\dfrac{\partial\mathcal{L}}{\partial A_{ik}^r}\\
& \left.
+A_{jk,s}^i
\dfrac{\partial\mathcal{L}}{\partial A_{jk,s}^h}
-A_{jh,r}^s\dfrac{\partial\mathcal{L}}{\partial A_{ji,r}^s}
-A_{hk,r}^s\dfrac{\partial\mathcal{L}}{\partial A_{ik,r}^s}
-A_{jk,h}^r\dfrac{\partial\mathcal{L}}{\partial A_{jk,i}^r}
\right) \\
& +\frac{\partial^2u^h}{\partial x^i\partial x^k}
\left(
-y_{ih}
\dfrac{\partial\mathcal{L}}{\partial y_{ii,k}}-y_{hj}
\dfrac{\partial \mathcal{L}}{\partial y_{ij,k}}
-\dfrac{\partial\mathcal{L}}{\partial A_{ik}^h}
\right. \\
& \left.
+A_{js}^k
\dfrac{\partial\mathcal{L}}{\partial A_{js,i}^h}
-A_{jh}^s\dfrac{\partial\mathcal{L}}{\partial A_{jk,i}^s}
-A_{hr}^r\dfrac{\partial\mathcal{L}}{\partial A_{kr,i}^r}
\right) \\
& -\frac{\partial^3u^i}{\partial x^h\partial x^k\partial x^j}
\dfrac{\partial\mathcal{L}}{\partial A_{jk,h}^i}.
\end{align*}
This equation is equivalent to the system of the statement
as the values for $u^h$, $\partial u^h/\partial x^i$,
$\partial^2u^h/\partial x^i\partial x^j$ ($i\leq j$), and
$\partial^3u^h/\partial x^i\partial x^j\partial x^k$
($i\leq j\leq k$) at a point $x\in  N$ can be taken arbitrarily.
Moreover, assume a linear combination holds
\begin{equation}
\begin{array}
[c]{l}
\lambda_a X^a
+\lambda_b^a X_a^b
+\sum_{b\leq c}\lambda_{bc}^aX_a^{bc}
+\sum_{b\leq c\leq d}\lambda_{bcd}^aX_a^{bcd}
=0,
\medskip\\
\lambda_a,\lambda_b^a,\lambda_{bc}^a,
\lambda_{bcd}^a\in  C^\infty (J^1(M\times _NC)).
\end{array}
\label{LinearCombination}
\end{equation}
By applying \eqref{LinearCombination} to $x^a$
(resp.\ $y_{ab}$) we obtain
$\lambda_a=0$ (resp.\ $\lambda_b^a=0$); again by applying
\eqref{LinearCombination} to $A_{bc}^a$, $b\leq c$
(resp.\ $A_{bc}^a$, $c\leq b $) and taking the expressions
of the vector fields \eqref{Xikh} and \eqref{Xijkh} into account,
we obtain $\lambda_{bc}^a=0$, $b\leq c$ (resp.\
$\lambda_{bc}^a=0$, $c\leq b$). Hence, \eqref{LinearCombination}
reads $\sum_{b\leq c\leq d}\lambda_{bcd}^aX_a^{bcd}=0$,
and by applying it to $A_{bc,d}^a$ and taking
the expressions of the vector fields \eqref{Xijkh}
into account, we finally obtain $\lambda_{bcd}^a=0$.
The distribution
\[
\mathcal{D}_{M\times _NC}
=\left\{
\bar{X}_{
\left(
j_x^1g,j_x^1s_{\Gamma}
\right)
}^{(1)}:
X\in \mathfrak{X}(N),
\left(
j_x^1g,j_x^1s_{\Gamma}
\right)
\in  J^1(M\times _NC)
\right\}
\]
in $T\left(  J^1(M\times _NC)\right) $, where $\bar{X}^{(1)}$
is defined in \eqref{X_bar^1}, is involutive as
\[
\left[
\bar{X}^{(1)},\bar{Y}^{(1)}
\right]
=\overline{
\left[  X,Y\right] }^{(1)},
\quad\forall X,Y\in \mathfrak{X}(N),
\]
and it is spanned by $X^i,X_h^i,X_h^{ik},X_i^{jkh}$, as proved by
the formulas above. The rest of the statement follows from the following
identities:
\begin{multline*}
\#\left\{  X^i;X_h^i;X_h^{ik},i\leq k;X_i^{jkh},h\leq j\leq
k:h,i,j,k=1,\dotsc,n\right\} \\
=n+n^2+n\tbinom{n+1}{2}+n\tbinom{n+2}{3}=n\tbinom{n+3}{3},
\end{multline*}
\[
\dim J^1
\left(
M\times _NC
\right)
-n\tbinom{n+3}{3}=\tfrac{1}{6}
\left(
5n^4+3n^3-5n^2+3n
\right) .
\]

\end{proof}

\section{Invariance of covariant Hamiltonians}

\subsection{Position of the problem}

On the bundle $E=M\times _NC$, an Ehresmann connection
can locally be written as follows:
\begin{equation}
\begin{array}
[c]{l}
\gamma=\sum_{i\leq j}
\left(
dy_{ij}+\gamma _{ijk}dx^k
\right)
\otimes
\dfrac{\partial}{\partial y_{ij}}
+\left(
dA_{jk}^i+\gamma _{jkl}^idx^l
\right)
\otimes \dfrac{\partial}{\partial A_{jk}^i},\\
\gamma _{ijk},\gamma _{jkl}^i\in  C^\infty (M\times _NC).
\end{array}
\label{Con_MxC}
\end{equation}
In particular, for a Lagrangian density $\Lambda $
on $J^1(M\times _NC)$ we obtain
\[
\Lambda^\gamma
=\left(
\sum_{i\leq j}\Bigl(\gamma _{ijk}+y_{ij,k}\Bigr)
\frac{\partial L}{\partial y_{ij,k}}+\Bigl(\gamma _{jkl}^i
+A_{jk,l}^i\Bigr)\frac{\partial L}{\partial A_{jk,l}^i}-L
\right)
dx^1 \wedge\cdots\wedge dx^n,
\]
or equivalently, $\mathcal{L}^\gamma =D^\gamma (\mathcal{L})-\mathcal{L}$,
where
\[
D^\gamma =\sum_{i\leq j}\Bigl(\gamma _{ijk}+y_{ij,k}\Bigr)
\frac{\partial }{\partial y_{ij,k}}
+\Bigl(
\gamma _{jkl}^i+A_{jk,l}^i
\Bigr)
\frac{\partial }{\partial A_{jk,l}^i}.
\]
\begin{remark}
\label{remark1}The horizontal form
$(p_0^1)^\ast \gamma-\theta=
\left(
\gamma _i^\alpha +y_i^\alpha
\right)
dx^i\otimes\partial/\partial y^\alpha $
can also be viewed as the $p_0^1$-vertical vector field
\begin{equation}
D^\gamma =
\left(
\gamma _i^\alpha +y_i^\alpha
\right)
\frac{\partial }{\partial y_i^\alpha },
\label{Dgamma}
\end{equation}
taking the natural isomorphism
$V(p_0^1)\cong(p_0^1)^\ast (p^\ast T^\ast N\otimes V(p))$
into account (cf.\ \cite{MSarda}, \cite{MSh},
\cite{Sarda}, \cite{Saunders}).
\end{remark}

According to the previous formulas, this means:
If the system \eqref{S} holds for a Lagrangian function
$\mathcal{L}$, then it also holds for the covariant
Hamiltonian $\mathcal{L}^\gamma $.

If $X\in \{X^i,X_h^i,X_h^{ik},X_i^{jkh}\}$, then
$X\left(
\mathcal{L}^\gamma
\right)  =X
\left(
D^\gamma (\mathcal{L})
\right) $, as $\mathcal{L}$ is assumed to be invariant
and hence $X(\mathcal{L})=0$. Therefore
\begin{align*}
X\left(
\mathcal{L}^\gamma
\right)   & =X
\left(
D^\gamma (\mathcal{L}
)\right) \\
& =
\left[  X,D^\gamma \right]
(\mathcal{L}),
\end{align*}
and we conclude the following:

\begin{proposition}
\label{proposition2}
The property \emph{\textbf{(P)}} holds for an Ehresmann
connection $\gamma$ on $M\times _NC$ if and only
if the vector field $D^\gamma $ transforms the sections
of the distribution $\mathcal{D}_{M\times _NC}$
into themselves, namely,
$[D^\gamma ,\Gamma(\mathcal{D}_{M\times _NC})]
\subseteq\Gamma(\mathcal{D}_{M\times _NC})$.
\end{proposition}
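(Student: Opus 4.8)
The plan is to prove both implications by reducing the $\mathfrak{X}(N)$-invariance of $\mathcal{L}^\gamma$ to an annihilation condition and then reading off membership in $\mathcal{D}_{M\times_NC}$ via the Frobenius structure of that distribution. Write $\{X_a\}$ for the global frame $X^i,X_h^i,X_h^{ik},X_i^{jkh}$ of $\mathcal{D}_{M\times_NC}$ supplied by Proposition \ref{proposition1}; by that proposition these fields are linearly independent and span $\mathcal{D}_{M\times_NC}$, and a function is $\mathfrak{X}(N)$-invariant precisely when it is annihilated by every $X_a$ (this is the content of the system \eqref{S}). In particular $\mathcal{L}^\gamma$ is invariant if and only if $X_a(\mathcal{L}^\gamma)=0$ for all $a$. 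Combining this with the identity $X_a(\mathcal{L}^\gamma)=[X_a,D^\gamma](\mathcal{L})$ already established above—valid because $X_a(\mathcal{L})=0$ for invariant $\mathcal{L}$—property \textbf{(P)} is seen to be equivalent to the statement that $[D^\gamma,X_a](\mathcal{L})=0$ for every invariant $\mathcal{L}$ and every frame field $X_a$. Moreover the bracket condition on all of $\Gamma(\mathcal{D}_{M\times_NC})$ reduces to the frame fields: for a general section $V=\sum_a f_aX_a$ the Leibniz rule gives $[D^\gamma,V]=\sum_a(D^\gamma f_a)X_a+\sum_a f_a[D^\gamma,X_a]$, whose first summand always lies in $\Gamma(\mathcal{D}_{M\times_NC})$, so the inclusion holds for all $V$ iff $[D^\gamma,X_a]\in\Gamma(\mathcal{D}_{M\times_NC})$ for each $a$.

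With these reductions the two implications are short. For the ``if'' direction, assume $[D^\gamma,\Gamma(\mathcal{D}_{M\times_NC})]\subseteq\Gamma(\mathcal{D}_{M\times_NC})$ and write $[D^\gamma,X_a]=\sum_b c_a^b X_b$ with $c_a^b\in C^\infty(J^1(M\times_NC))$; then for any invariant $\mathcal{L}$ one has $[D^\gamma,X_a](\mathcal{L})=\sum_b c_a^b X_b(\mathcal{L})=0$, whence $\mathcal{L}^\gamma$ is invariant, i.e.\ \textbf{(P)} holds. For the ``only if'' direction, assume \textbf{(P)}; the reduction yields $[D^\gamma,X_a](\mathcal{L})=0$ for every invariant $\mathcal{L}$, and it remains to conclude $[D^\gamma,X_a]\in\Gamma(\mathcal{D}_{M\times_NC})$. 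Here I would invoke the Frobenius theorem: since $\mathcal{D}_{M\times_NC}$ is involutive of constant rank $n\binom{n+3}{3}$ by Proposition \ref{proposition1}, about each point there exist adapted coordinates in which the leaves are the level sets of the transverse coordinate functions, and these transverse coordinates are exactly the local $\mathfrak{X}(N)$-invariants. A vector field annihilating all of them necessarily has vanishing transverse component, hence is a section of $\mathcal{D}_{M\times_NC}$; applied to $[D^\gamma,X_a]$ this closes the argument.

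The main obstacle is precisely this last step, the passage from ``$[D^\gamma,X_a]$ kills every invariant function'' to ``$[D^\gamma,X_a]$ is tangent to $\mathcal{D}_{M\times_NC}$.'' It rests on having invariants enough to separate the leaves, which is guaranteed by the constant rank and involutivity of Proposition \ref{proposition1}; equivalently, the functionally independent invariants counted there furnish a full set of transverse coordinates. One must also reconcile the fact that \textbf{(P)}, although phrased through globally defined invariant Lagrangian densities $\Lambda=\mathcal{L}\mathbf{v}$, should be tested against the full \emph{local} supply of invariants required by the Frobenius argument: this is legitimate because both the invariance condition \eqref{S} and the formation of $\mathcal{L}^\gamma=D^\gamma(\mathcal{L})-\mathcal{L}$ are local operations (note $D^\gamma$ differentiates only in the derivative coordinates $y_{ij,k},A_{jk,l}^i$, on which $\rho$ does not depend), so that the desired membership $[D^\gamma,X_a]\in\Gamma(\mathcal{D}_{M\times_NC})$—itself a pointwise condition—may be verified chart by chart.
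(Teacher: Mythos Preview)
Your argument is correct and follows the same line as the paper's: both reduce the question to the identity $X_a(\mathcal{L}^\gamma)=[X_a,D^\gamma](\mathcal{L})$ for invariant $\mathcal{L}$, which is exactly the computation displayed just before the proposition. The paper essentially stops there and states the proposition as an immediate consequence, whereas you spell out both implications explicitly---in particular, you supply the Frobenius argument needed for the ``only if'' direction (that a vector field annihilating every local invariant must be tangent to $\mathcal{D}_{M\times_NC}$), which the paper leaves to the reader.
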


The problem thus reduces to compute the brackets
$\left[  X^i,D^\gamma \right]  $,
$\left[  X_h^i,D^\gamma \right]  $,
$\left[  X_h^{ik},D^\gamma \right]  $,
and $[X_i^{jkh},D^\gamma ]$. We have
\begin{align}
\left[  X^h,D^\gamma \right]
& =\sum_{i\leq j}
\frac{\partial\gamma _{ijk}}{\partial x^h}
\frac{\partial}{\partial y_{ij,k}}
+\frac{\partial\gamma _{jkl}^i}{\partial x^h}
\frac{\partial}{\partial A_{jk,l}^i},
\label{XhDgamma}\\
\left[
X_b^{cda},D^\gamma
\right]
& =X_b^{cda},\quad\forall b,c\leq d\leq a,
\nonumber
\end{align}
\begin{align}
\quad
\left[
X_h^i,D^\gamma
\right]
& =\sum_{a\leq b}Y_h^i
\left(
\gamma _{abk}
\right)
\frac{\partial}{\partial y_{ab,k}}
+\sum_{i\leq h}\gamma _{ihk}\dfrac{\partial}{\partial y_{ii,k}}
+\sum_{h<i}\gamma _{hik}\dfrac{\partial}{\partial y_{ii,k}}
\label{XhiDgamma1}\\
& +\sum_{h\leq j}\gamma _{hjk}
\dfrac{\partial}{\partial y_{ij,k}}
+\sum _{j<h}\gamma _{jhk}\dfrac{\partial}{\partial y_{ij,k}}
+\sum_{a\leq b} \gamma _{abh}\dfrac{\partial}{\partial y_{ab,i}}
\nonumber\\
& +\left(
Y_h^i
\left(
\gamma _{bcr}^a
\right)
-\delta_a^h\gamma _{bcr}^i
+\delta_i^c\gamma _{bhr}^a
+\delta_i^b\gamma _{hcr}^a
+\delta_i^r\gamma _{bch}^a
\right)
\dfrac{\partial}{\partial A_{bc,r}^a},
\nonumber
\end{align}
\begin{equation}
\left[
X_h^{ik},D^\gamma
\right]
=\sum_{a\leq b}Y_h^{ik}
\left(
\gamma _{abc}
\right)
\frac{\partial}{\partial y_{ab,c}}+Y_h^{ik}
\left(
\gamma _{abc}^d
\right)
\frac{\partial}{\partial A_{ab,c}^d}
+X_h^{ik}-Y_h^{ik},
\label{bracket3}
\end{equation}
where
\[
Y_h^i=-y_{hi}
\dfrac{\partial}{\partial y_{ii}}
-y_{hj}
\dfrac{\partial }{\partial y_{ij}}
+A_{jk}^i
\dfrac{\partial}{\partial A_{jk}^h}
-A_{jh}^r
\dfrac{\partial}{\partial A_{ji}^r}
-A_{hk}^r
\dfrac{\partial}{\partial A_{ik}^r},
\]
\[
Y_h^{ik}=-\frac{\partial}{\partial A_{ik}^h}
-\frac{\partial}{\partial A_{ki}^h},
\]
and the following formula has been used:
\[
\dfrac{\partial y_{rs,k}}{\partial y_{ij,h}}=\delta_h^k
\left(
\delta _i^r\delta_j^s
+\delta_j^r\delta_i^s
-\delta_j^i\delta _r^i\delta_s^j
\right) .
\]

\subsection{The class of the Ehresmann connections defined}

Let $p\colon M\times _NC\to N$,
$\mathrm{pr}_1\colon M\times _NC\to M$,
$\mathrm{pr}_2\colon M\times _NC\to C$ be the
natural projections. By taking the differential of
$\mathrm{pr}_1$ and $\mathrm{pr}_2$, a natural
identification is obtained
$T(M\times _NC)=TM\times _{TN}TC$. Hence
\begin{align*}
V(p)
& =V(p_{M})\times _NV(p_C)\\
& =\mathrm{pr}_1^\ast V(p_{M})
\oplus \mathrm{pr}_2^\ast V(p_C)
\end{align*}
and two unique vector-bundle homomorphisms exist
\[
\gamma _M\colon \mathrm{pr}_1^\ast TM
\to \mathrm{pr}_1^\ast V(p_M),
\quad
\gamma _C\colon \mathrm{pr}_2^\ast TC
\to \mathrm{pr}_2^\ast V(p_C),
\]
such that,
\[
\begin{array}
[c]{ll}
\gamma(X)=
\left(
\gamma _M
\left(
\mathrm{pr}_{1\ast}X
\right)
,\gamma _C
\left(
\mathrm{pr}_{2\ast}X
\right)
\right)
, & \forall X\in  T(M\times _NC),\\
\gamma _M(Y)=Y, &
\forall Y\in \mathrm{pr}_1^\ast V(p_M),\\
\gamma _C(Z)=Z, & \forall Z\in \mathrm{pr}_2^\ast V(p_C).
\end{array}
\]
If $\gamma$ is given by the local expression
of the formula \eqref{Con_MxC}, then
\[
\begin{array}
[c]{l}
\gamma _{M}=\sum_{i\leq j}
\left(
dy_{ij}+\gamma _{ijk}dx^k
\right)
\otimes\dfrac{\partial}{\partial y_{ij}},\;\gamma _C
=\left(
dA_{jk}^i+\gamma _{jkl}^idx^l
\right)
\otimes\dfrac{\partial}{\partial A_{jk}^i},\medskip\\
\gamma _{ijk},\gamma _{jkl}^i\in  C^\infty (M\times _NC).
\end{array}
\]

\subsubsection{The first geometric condition on $\gamma$
\label{first_condition}}

Let $q\colon F(N)\to M$ be the projection given by
\begin{align}
q(X_1,\dotsc,X_N)  & =g_x\label{map_q}\\
& =\varepsilon_hw^h\otimes w^h,\nonumber
\end{align}
where $(w^1,\dotsc,w^n)$ is the dual coframe of
$(X_1,\dotsc,X_N)\in  F_x(N)$, i.e., $g_x$ is the metric
for which $(X_1,\dotsc,X_N)$ is a $g_x$-orthonormal
basis and $\varepsilon_h=1$ for $1\leq h\leq n^{+}$,
$\varepsilon_h=-1$ for $n^{+}+1\leq h\leq n$. As readily seen,
$q$ is a principal $G$-bundle with $G=O(n^{+},n^{-})$.

Given a linear connection $\Gamma$ and a tangent vector
$X\in  T_xN$, for every $u$ in $p^{-1}(x)$ there exists
a unique $\Gamma$-horizontal tangent vector
$X_u^{h_\Gamma}\in  T_u(FN)$ such that,
$(p_F)_\ast X_u^{h_\Gamma }=X$. The local expression
for the horizontal lift is known
to be (\cite[Chapter III, Proposition 7.4]{KN}),
\begin{equation}
\left(
\frac{\partial}{\partial x^j}
\right) ^{h_\Gamma}
=\frac{\partial }{\partial x^j}
-\Gamma _{jk}^ix_l^k
\frac{\partial}{\partial x_l^i}.
\label{horizontal_lift}
\end{equation}

\begin{lemma}
\label{lemma1}Given a metric $g_x\in  p_M^{-1}(x)$, let
$u\in  p_F^{-1}(x)$ be a linear frame such that $q(u)=g_x$.
The projection $q_\ast (X_u^{h_{\Gamma _x}})$ does not
depend on the linear frame $u$ chosen over $g_x$.
\end{lemma}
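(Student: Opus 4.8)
The plan is to exploit the principal-bundle structure of $q$ together with the $Gl(n,\mathbb{R})$-equivariance of horizontal lifts. Let $u,u'\in p_F^{-1}(x)$ be two linear frames with $q(u)=q(u')=g_x$. Since $q\colon F(N)\to M$ is a principal $G$-bundle with $G=O(n^{+},n^{-})$, the fibre $q^{-1}(g_x)$ is a single $G$-orbit; hence there is an $a\in O(n^{+},n^{-})$ with $u'=R_a(u)=u\cdot a$, where $R_a$ denotes the right translation by $a$ on $F(N)$. Moreover, since frames in the same $G$-orbit define the same metric, $q\circ R_a=q$ for every $a\in G$.

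First I would record the equivariance of the $\Gamma$-horizontal lift under the structure group of $F(N)$: for every $a\in Gl(n,\mathbb{R})$ and every $X\in T_xN$ one has $(R_a)_\ast X_u^{h_\Gamma}=X_{u\cdot a}^{h_\Gamma}$. This is the standard invariance of the horizontal distribution of a principal connection under right translations, $H_{ua}=(R_a)_\ast H_u$, combined with $p_F\circ R_a=p_F$: the latter forces $(R_a)_\ast X_u^{h_\Gamma}$ to be the horizontal vector at $u\cdot a$ projecting onto $X$, so by uniqueness it equals $X_{u\cdot a}^{h_\Gamma}$. If a direct check is preferred, the same identity follows from the local expression \eqref{horizontal_lift} by computing $(R_a)_\ast(\partial/\partial x_l^i)$ and using the transformation law $x_m^k(u\cdot a)=x_l^k(u)\,a_m^l$ of the fibre coordinates.

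Then the conclusion is immediate: applying $q_\ast$ and combining the two facts,
\[
q_\ast\bigl(X_{u'}^{h_{\Gamma_x}}\bigr)
=q_\ast\bigl((R_a)_\ast X_u^{h_{\Gamma_x}}\bigr)
=(q\circ R_a)_\ast\bigl(X_u^{h_{\Gamma_x}}\bigr)
=q_\ast\bigl(X_u^{h_{\Gamma_x}}\bigr),
\]
which is precisely the asserted independence of the chosen frame $u$ over $g_x$.

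The main obstacle is conceptual rather than computational: one must observe that, although the horizontal lift is taken inside the larger principal bundle $F(N)$ with structure group $Gl(n,\mathbb{R})$, the two frames over $g_x$ differ only by an element of the subgroup $G=O(n^{+},n^{-})$, and that $q$ is constant exactly along $G$-orbits. Equivariance of the lift holds for the full group $Gl(n,\mathbb{R})$, so restricting to $a\in G$ and then projecting by $q$—which annihilates precisely the $G$-directions—is what makes the dependence on $u$ disappear. Care is only needed in keeping the base-point bookkeeping straight, namely that $(R_a)_\ast X_u^{h_{\Gamma_x}}$ lives at $u\cdot a=u'$, so that applying $q_\ast$ at $g_x$ is legitimate.
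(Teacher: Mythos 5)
Your proposal is correct and follows essentially the same route as the paper: write the second frame as $u\cdot a$ with $a\in O(n^{+},n^{-})$, use the right-translation invariance of the $\Gamma$-horizontal distribution to identify $(R_a)_\ast X_u^{h_\Gamma}$ with $X_{u\cdot a}^{h_\Gamma}$, and conclude from $q\circ R_a=q$. The extra remarks on equivariance holding for all of $Gl(n,\mathbb{R})$ and on base-point bookkeeping are sound but not needed beyond what the paper already records.
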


\begin{proof}
In fact, any other linear frame projecting onto $g_x$
can be written as $u\cdot A$, $A\in  G$. As the horizontal
distribution is invariant under right translations
(see \cite[II, Proposition 1.2]{KN}), the following equation
holds:
$\left(
R_A
\right) _\ast
\left(
X_u^{h_\Gamma }
\right)
=X_{u\cdot A}^{h_\Gamma}$. Hence
\begin{align*}
q_\ast
\left(
X_{u\cdot A}^{h_\Gamma}
\right)   & =q_\ast
\left(
\left(
R_A
\right)  _\ast
\left(
X_u^{h_\Gamma}
\right)
\right) \\
& =\left(
q\circ R_A
\right)  _\ast
\left(  X_u^{h_\Gamma}
\right) \\
& =q_\ast
\left(
X_u^{h_\Gamma}
\right)  .
\end{align*}

\end{proof}

\begin{proposition}
\label{proposition0}An Ehresmann connection $\gamma$ on
$M\times _NC$\ satisfies the following condition:

\begin{description}
\item[$(C_M)$] $\gamma _M
\left(
\left(  g_x,\Gamma _x
\right)
,X
\right)
=X-q_\ast
\left(
\left(
(p_M)_\ast(X)
\right) _u^{h_{\Gamma _x}}
\right) ,
\quad $
\end{description}
\noindent
$\forall X\in  T_{g_x}M,\;u\in  q^{-1}(g_x)$,
(which does not depend on the linear frame
$u\in  q^{-1}(g_x)$ chosen, according to
\emph{Lemma \ref{lemma1}}) if and only if the following
equations hold:
\begin{equation}
\gamma _{klj}=-
\left(
y_{al}A_{jk}^a+y_{ak}A_{jl}^a
\right) ,
\label{gamma_klj}
\end{equation}
where the functions $\gamma _{klj}$ (resp.\ $y_{ij}$, resp.\ $A_{jk}^i$)
are defined in the formula \emph{\eqref{Con_MxC}}
(resp.\ \emph{\ \eqref{inducedsystemM(N)},} resp.\
\emph{\eqref{inducedsystemC(N)}}).
\end{proposition}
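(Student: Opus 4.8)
The plan is to translate the intrinsic identity $(C_M)$ into coordinates and read off \eqref{gamma_klj} by matching the coefficients of $\partial/\partial y_{ij}$. Since $\gamma_M$ is $V(p_M)$-valued and both sides of $(C_M)$ are applied to an arbitrary $X\in T_{g_x}M$, the whole content of $(C_M)$ is carried by the explicit form of the vertical vector field $q_*\big(((p_M)_*X)^{h_{\Gamma_x}}_u\big)$. First I would record the coordinate form of the map $q$ of \eqref{map_q}: if $u=(X_1,\dots,X_n)$ has frame coordinates $(x^i_j)$ and $(\bar x^h_i)$ denotes the inverse matrix, $\bar x^h_i x^i_j=\delta^h_j$, then the dual coframe is $w^h=\bar x^h_i\,dx^i$, and \eqref{inducedsystemM(N)} gives $y_{ab}\circ q=\varepsilon_h\bar x^h_a\bar x^h_b$. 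In particular $y_{ab}\circ q$ depends only on the fibre coordinates $(x^i_j)$, so that $q_*(\partial/\partial x^j)=\partial/\partial x^j$.

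The heart of the computation is the partial derivative $\partial(y_{ab}\circ q)/\partial x^i_l$. Differentiating $\bar x^h_i x^i_j=\delta^h_j$ yields $\partial\bar x^h_a/\partial x^i_l=-\bar x^h_i\bar x^l_a$, whence $\partial(y_{ab}\circ q)/\partial x^i_l=-(\bar x^l_a y_{ib}+\bar x^l_b y_{ia})$ after using $y_{ib}=\varepsilon_h\bar x^h_i\bar x^h_b$. Next I take $X=X^j\partial/\partial x^j\in T_xN$ and lift it horizontally by \eqref{horizontal_lift}, with $\Gamma^i_{jk}=A^i_{jk}$, obtaining the $V(p_F)$-vector $-X^j A^i_{jk}x^k_l\,\partial/\partial x^i_l$; pushing it forward by $q_*$ and contracting $x^k_l\bar x^l_a=\delta^k_a$ gives
\[
q_*\big((X^j\partial/\partial x^j)^{h_{\Gamma_x}}_u\big)=X^j\frac{\partial}{\partial x^j}+\sum_{a\le b}X^j\big(A^i_{ja}y_{ib}+A^i_{jb}y_{ia}\big)\frac{\partial}{\partial y_{ab}}.
\]
This expression involves only $y_{ab}$, $A^i_{jk}$ and $X^j$, which re-proves directly the frame-independence already guaranteed by Lemma \ref{lemma1}.

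Finally, writing $X=X^j\partial/\partial x^j+\sum_{a\le b}X^{ab}\partial/\partial y_{ab}$ and reading $\gamma_M(X)=\sum_{i\le j}(X^{ij}+\gamma_{ijk}X^k)\partial/\partial y_{ij}$ off \eqref{Con_MxC}, the $\partial/\partial x^j$ terms cancel in $X-q_*(\cdots)$ and the $X^{ab}$ terms cancel against those of $\gamma_M(X)$, so that $(C_M)$ reduces to $\gamma_{abk}X^k=-X^j(A^i_{ja}y_{ib}+A^i_{jb}y_{ia})$ for all $X^k$, i.e.\ $\gamma_{abj}=-(y_{ib}A^i_{ja}+y_{ia}A^i_{jb})$; this is precisely \eqref{gamma_klj} after renaming $(a,b,i)\to(k,l,a)$. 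The converse follows by reversing the steps, since each is reversible. I expect the one delicate point to be the index bookkeeping in the middle step—keeping the symmetry $y_{ab}=y_{ba}$ and the restriction $a\le b$ mutually consistent so that no spurious factor of two creeps into the coefficient of $\partial/\partial y_{ab}$—rather than any conceptual difficulty.
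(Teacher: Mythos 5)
Your proposal is correct and follows essentially the same route as the paper's own proof: compute $y_{ab}\circ q$ in frame coordinates, use the derivative identity for the inverse matrix $\partial\bar x^h_a/\partial x^i_l=-\bar x^h_i\bar x^l_a$, push the horizontal lift \eqref{horizontal_lift} forward by $q_\ast$, and match coefficients of $\partial/\partial y_{ab}$ against \eqref{Con_MxC}. The index bookkeeping you flag (the $a\le b$ versus symmetric-sum convention) is handled identically in the paper and causes no factor-of-two discrepancy.
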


\begin{proof}
Letting
$(\chi_j^i)_{i,j=1}^n= \left( (x_j^i)_{i,j=1}^n \right) ^{-1}$,
the dual coframe of the linear frame $u=(X_1,\dotsc,X_N)\in  F_x(N)$
given in \eqref{frame} is $(w^1,\dotsc,w^n)$,
$w^h=\chi _k^h(u)\left(  dx^k\right)  _x$, $1\leq h\leq n$,
and the projection $q$ is given by
\begin{align*}
q(u)  & =g_x\\
& =\sum\nolimits_{h=1}^n\varepsilon_{h}\chi_k^h(u)\chi_l^h(u)
\left(
dx^k
\right)  _x\otimes
\left(
dx^l
\right)  _x.
\end{align*}
Therefore the equations of the projection
\eqref{map_q} are as follows:
\[
\begin{array}
[c]{l}
x^i\circ q=x^i,\\
y_{kl}\circ q=\sum\nolimits_{h=1}^n\varepsilon_{h}\chi_{k}^h\chi_{l}^h.
\end{array}
\]
Hence
\[
q_\ast
\left(
\frac{\partial}{\partial x_b^a}
\right)  _{u}
=\sum_{k\leq l}\varepsilon_h
\left\{
\frac{\partial\chi_k^h}{\partial x_b^a}\chi_l^h
+\chi_k^h\frac{\partial\chi_l^h}{\partial x_b^a}
\right\} (u)
\left(
\frac{\partial}{\partial y_{kl}}
\right) _{g_x}.
\]
Taking derivatives with respect to $x_b^a$ on the identity
$\chi_r^hx_i^{r}=\delta_i^h$, multiplying the outcome
by $\chi_k^i $, and summing up over the index $i$,
the following formula is obtained:
$\partial\chi_k^h/\partial x_b^a=-\chi_a^h\chi_k^b$.
Replacing this equation into the expression for
$q_\ast \left( \partial/\partial x_b^a \right) _u$
above, we have
\[
q_\ast
\left(
\frac{\partial}{\partial x_b^a}
\right) _u
=-\sum_{k\leq l}
\left\{
\chi_k^b(u)y_{al}
\left(
g_x
\right)
+\chi_l^b(u)y_{ak}
\left(
g_x\right)
\right\}
\left(
\frac{\partial}{\partial y_{kl}}
\right)  _{g_x}.
\]
From \eqref{horizontal_lift}, evaluated at $u\in  q^{-1}(g_x)$, we deduce
\[
\begin{array}
[c]{ll}
q_{\ast}
\left(
\dfrac{\partial}{\partial x^j}
\right)  _u^{h_\Gamma} &
\!\!\!=
\left(  \!
\dfrac{\partial}{\partial x^j}\!
\right)  _{g_x }\!\!\!
-\Gamma _{jc}^a(x)x_b^c(u)q_\ast
\left(  \!
\dfrac{\partial }{\partial x_b^a}\!
\right)  _{g_x}\medskip\\
& \!\!\!=
\left(  \!\dfrac{\partial}{\partial x^j}\!
\right) _{g_x}
\medskip\\
& +\sum_{k\leq l}\Gamma _{jc}^a(x)x_b^c(u)
\left\{
\chi_k^b (u)y_{al}
\left(  g_x
\right)
+\chi_l^b(u)y_{ak}
\left(  g_x
\right)
\right\}
\left(  \!
\dfrac{\partial}{\partial y_{kl}}\!
\right)  _{g_x}
\medskip\\
& \!\!\!=
\left(  \!\dfrac{\partial}{\partial x^j}\!
\right)  _{g_x}\!\!\!+\sum_{k\leq l}
\left\{
\Gamma _{jk}^a(x)y_{al}
\left(
g_x
\right)
+\Gamma _{jl}^a(x)y_{ak}
\left(
g_x
\right)
\right\}  \!
\left(
\!\dfrac{\partial}{\partial y_{kl}}\!
\right) _{g_x}.
\end{array}
\]
The condition $(C_M)$ holds automatically whenever $X\in  V(p_M)$.
Hence, $(C_M)$ holds if and only if it holds for
$X=(\partial/\partial x^j)_{g_x}$, namely,
\begin{align*}
\sum_{k\leq l}\gamma _{klj}(g_x,\Gamma _x)
\left(
\dfrac{\partial}{\partial y_{kl}}
\right)  _{g_x} &
=\gamma _M
\left(
\left(
g_x,\Gamma _x
\right)  ,
\left(
\frac{\partial}{\partial x^j}
\right)  _{g_x}
\right)
\\
&  =\left(
\frac{\partial}{\partial x^j}
\right)  _{g_x}
-q_\ast
\left(
\frac{\partial}{\partial x^j}
\right)  _u^{h_{\Gamma _x}}\\
&  =-\sum_{k\leq l}
\left\{
\Gamma _{jk}^a(x)y_{al}
\left(  g_x
\right)
+\Gamma _{jl}^a(x)y_{ak}
\left(  g_x
\right)
\right\}  \!
\left(
\frac{\partial}{\partial y_{kl}}
\right)  _{g_x}\!,
\end{align*}
thus proving the formula \eqref{gamma_klj} in the statement.
\end{proof}

\subsubsection{The canonical covariant derivative\label{canonical_cov_deriv}}

As is known (e.g., see \cite[III, section 1]{KN},
\cite[pp. 157--158]{MSarda}) every connection $\Gamma$
on a principal $G$-bundle $P\to N$ induces a covariant derivative
$\nabla^\Gamma $ on the vector bundle associated to $P$
under a linear representation $\rho\colon G\to Gl(m,\mathbb{R})$
with standard fibre $\mathbb{R}^m$. In particular,
this applies to the principal bundle of linear frames,
thus proving that every linear connection $\Gamma$ on $N$
induces a covariant derivative $\nabla^\Gamma $ on every tensorial
vector bundle $E\to N$.

The bundles $(p_C)^\ast E$, where $E$ is
a tensorial vector bundle, are endowed with a canonical covariant
derivative $\nabla^E$ completely determined by the formula:
\begin{equation}
\left(
\left(
\nabla^E
\right)  _X(f\xi)
\right)  (\Gamma _x)
=\left(
\left(  Xf
\right)  \xi
\right)
\left(
\Gamma _x
\right)
+f
\left(
\Gamma _x
\right)
\left(
\nabla_{(p_C)_\ast X}^{\Gamma _x}\xi
\right) (x),
\label{canonical}
\end{equation}
for all $X\in  T_{\Gamma _x}C$, $f\in  C^\infty (C)$,
and every local section $\xi$ of $E$ defined on a
neighbourhood of $x$. The uniqueness of $\nabla^E$
follows from \eqref{canonical} as the sections of
$E$ span the sections of $(p_C)^\ast E$ over
$C^\infty (C)$, see \cite[0.3.6]{DPV}. Below, we are
specially concerned with the cases $E=TN$ and
$E=\wedge^2T^\ast N\otimes TN$.

\subsubsection{The $2$-form associated with $\gamma _C$}

As $p_C\colon C\to N$ is an affine bundle modelled over
$\otimes ^2T^\ast N\otimes TN$, there is a natural identification
\[
V(p_C)\cong
\left(  p_C
\right)  ^\ast
\left(
\otimes^2T^\ast N\otimes TN
\right)
\]
and consequently, an Ehresmann connection
$\gamma _C$ on $C$ can also be viewed as a homomorphism
$\gamma _C\colon TC\to\otimes^2T^{\ast }N\otimes TN$.
If $\gamma _C$ is locally given by
\begin{equation}
\gamma _C=
\left(
dA_{jk}^i+\gamma _{jkl}^idx^l
\right)  \otimes
\dfrac{\partial}{\partial A_{jk}^i},
\quad\gamma _{jkl}^i\in  C^{\infty }(C),
\label{gamma_C}
\end{equation}
then
\[
\gamma _C=(dA_{jk}^i
+\gamma _{jkl}^idx^l)\otimes dx^j\otimes dx^k\otimes
\frac{\partial}{\partial x^i},
\]
and $\gamma _C$ induces a $2$-form $\tilde{\gamma}_C$ taking values in
$(p_C)^\ast (T^\ast N\otimes TN)$ as follows:
\[
\begin{array}
[c]{l}
\tilde{\gamma}_C(X,Y)=c_1^1
\left(
(p_C)_\ast(Y)\otimes\gamma _C
\left(
X
\right)
\right)
-c_1^1
\left(
(p_C)_\ast(X)\otimes \gamma _C
\left(
Y
\right)
\right)  ,\smallskip\\
\forall X,Y\in  T_{\Gamma _x}C,
\end{array}
\]
where
\[
\begin{array}
[c]{l}
c_1^1\colon TN\otimes T^\ast N\otimes T^\ast N\otimes TN
\to T^\ast N\otimes TN,\\
c_1^1
\left(
X_1\otimes w_1\otimes w_2\otimes X_2
\right)
=w_1(X_1)w_2\otimes X_2,\\
X_1,X_2\in  T_xN,\;w_1,w_2\in  T_x^\ast N.
\end{array}
\]

If $\gamma _C$ is given by \eqref{gamma_C},
then from the very definition of $\tilde{\gamma}_C$
the following local expression is obtained:
\[
\tilde{\gamma}_C=
\left(
dA_{lh}^c+
\left(
\gamma _{lha}^c-\gamma _{ahl}^c
\right)  dx^a
\right)  \wedge dx^l\otimes dx^h\otimes
\frac{\partial}{\partial x^c}.
\]

\subsubsection{The second geometric condition on $\gamma$}

Let $\mathrm{alt}_{12}\colon\otimes^2T^\ast N\otimes TN
\to \wedge^2T^\ast N\otimes TN$ be the operator alternating
the two covariant arguments.

The vector bundle
$(p_C)^\ast \left(  \wedge^2T^\ast N\otimes TN\right) $
admits a canonical section
\[
\begin{array}
[c]{l}
\tau_N\colon C\to\wedge^2T^\ast N\otimes TN,\\
\tau_N
\left(  \Gamma _x
\right)
=T^{^{\Gamma _x}},\;\forall\Gamma _x\in  C,
\end{array}
\]
where $T^{^{\Gamma _x}}$ is the torsion of $\Gamma _x$.
Locally,
\[
\tau_N=\sum_{j<k}(A_{jk}^i-A_{kj}^i)dx^j\wedge dx^k\otimes
\frac{\partial}{\partial x^i}.
\]
From the previous formulas the next result follows:

\begin{proposition}
\label{proposition3}Let $\gamma$ be an Ehresmann connection on
$M\times _NC$, let $\nabla^{(1)}=\nabla^{E_{1}}$ with
$E_{1}=TN$, let $R^{\nabla^{(1)}}$ be its curvature form,
and finally, let $\nabla^{(2)}=\nabla^{E_2}$ with
$E_2=\wedge^2T^\ast N\otimes TN$.

\begin{enumerate}
\item[$(C_C)$] Assume the component $\gamma _C$ of
$\gamma$ is defined on $C$. Then, the equations
\begin{equation}
\tilde{\gamma}_C=R^{\nabla^{(1)}},
\label{b}
\end{equation}
\begin{equation}
\mathrm{alt}_{12}\circ\gamma _C=\nabla^{(2)}\tau_N,
\label{c}
\end{equation}
are locally equivalent to the following ones:
\begin{equation}
\gamma _{str}^h-\gamma _{rts}^h
=A_{rm}^hA_{st}^m-A_{sm}^hA_{rt}^m,
\label{28}
\end{equation}
\begin{align}
\gamma _{rst}^h-\gamma _{srt}^h  & =A_{tm}^h
\left(
A_{rs}^m-A_{sr}^m
\right)
+A_{ts}^m
\left(
A_{mr}^h-A_{rm}^h
\right)
\label{29}\\
& \qquad
\qquad
\qquad
\quad
\;
+A_{tr}^m
\left(  A_{sm}^h-A_{ms}^h\right)
.\nonumber
\end{align}

\end{enumerate}
\end{proposition}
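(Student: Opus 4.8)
The plan is to establish both equivalences by a direct computation in the induced coordinates, handling \eqref{b} and \eqref{c} separately; in each case the terms involving the fibre differentials $dA^i_{jk}$ will match tautologically, so that the geometric content of each condition is carried by an identity among the purely horizontal coefficients. For \eqref{b} I would first write the canonical covariant derivative $\nabla^{(1)}$ on $(p_C)^\ast TN$ explicitly. Since $\nabla^{\Gamma_x}_{\partial/\partial x^j}(\partial/\partial x^k)=A^i_{jk}\,\partial/\partial x^i$ by the identification $A^i_{jk}(\Gamma_x)=\Gamma^i_{jk}(x)$, and since the vertical directions $\partial/\partial A^a_{bc}$ project to zero under $(p_C)_\ast$, the defining formula \eqref{canonical} shows the connection form is purely horizontal, $\omega^i_{\;k}=A^i_{jk}\,dx^j$. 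Its curvature is then $\Omega^i_{\;k}=d\omega^i_{\;k}+\omega^i_{\;m}\wedge\omega^m_{\;k}=dA^i_{jk}\wedge dx^j+A^i_{jm}A^m_{lk}\,dx^j\wedge dx^l$, whence $R^{\nabla^{(1)}}=\Omega^i_{\;k}\otimes dx^k\otimes\partial/\partial x^i$. Comparing with the local expression $\tilde\gamma_C=(dA^i_{jk}+\gamma^i_{jkl}dx^l)\wedge dx^j\otimes dx^k\otimes\partial/\partial x^i$ obtained in the previous subsection, the terms $dA^i_{jk}\wedge dx^j$ cancel, so that \eqref{b} reduces to $\gamma^i_{jkl}\,dx^l\wedge dx^j=A^i_{jm}A^m_{lk}\,dx^j\wedge dx^l$; matching the antisymmetric coefficients of $dx^s\wedge dx^t$ and relabelling indices yields exactly \eqref{28}.

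For \eqref{c} I would compute both sides as $1$-forms on $C$ valued in $(p_C)^\ast(\wedge^2T^\ast N\otimes TN)$. On the left, $\mathrm{alt}_{12}\circ\gamma_C$ has $(h;rs)$-component $(dA^h_{rs}-dA^h_{sr})+(\gamma^h_{rsl}-\gamma^h_{srl})dx^l$. On the right, I would apply \eqref{canonical} to $\tau_N=\sum_{a<b}(A^c_{ab}-A^c_{ba})dx^a\wedge dx^b\otimes\partial/\partial x^c$: differentiating the coefficient functions $A^c_{ab}-A^c_{ba}$ supplies the vertical part $dA^h_{rs}-dA^h_{sr}$, which matches the left-hand side identically, so the condition collapses onto the horizontal parts. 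The horizontal part of $\nabla^{(2)}\tau_N$ comes from $\nabla^{\Gamma_x}_{\partial/\partial x^l}$ acting on the tensor $dx^a\wedge dx^b\otimes\partial/\partial x^c$, which by the Leibniz rule together with $\nabla_{\partial/\partial x^l}dx^a=-A^a_{lm}dx^m$ and $\nabla_{\partial/\partial x^l}\partial/\partial x^c=A^m_{lc}\partial/\partial x^m$ produces three correction terms. Collecting the antisymmetric $dx^r\wedge dx^s\otimes\partial/\partial x^h$ components and equating the $dx^l$-coefficients on both sides (with $l\to t$) gives precisely the three summands on the right of \eqref{29}: the term $A^h_{tm}(A^m_{rs}-A^m_{sr})$ arising from the $\partial/\partial x^c$-slot, and the two remaining terms from the two covariant slots.

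The curvature computation in \eqref{b} and the differentiation of the coefficient functions in \eqref{c} are routine. I expect the main obstacle to be the bookkeeping in the horizontal part of $\nabla^{(2)}\tau_N$: correctly reindexing the wedge products produced by the two covariant-slot derivatives, and verifying that, once antisymmetrized, they assemble into exactly the last two terms $A^m_{ts}(A^h_{mr}-A^h_{rm})+A^m_{tr}(A^h_{sm}-A^h_{ms})$ of \eqref{29} rather than some permutation thereof. The one place where a spurious factor or sign could slip in is the choice of alternation convention, which must be kept compatible with the torsion normalization $A^c_{ab}-A^c_{ba}$ used for $\tau_N$.
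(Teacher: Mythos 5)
Your plan is correct and is essentially the paper's own route: the paper offers no explicit proof, merely asserting that the proposition ``follows from the previous formulas,'' i.e.\ from the local expression of $\tilde{\gamma}_C$, the local form of $\tau_N$, and the defining formula \eqref{canonical} of the canonical covariant derivative, which is exactly the computation you outline. Your identification of the connection form $\omega^i_{\;k}=A^i_{jk}\,dx^j$, the tautological cancellation of the $dA^i_{jk}$ terms in both conditions, and the three Leibniz contributions assembling into the right-hand side of \eqref{29} all check out (evaluating on coordinate vector fields rather than on wedge coefficients is the safest way to dispose of the alternation-convention issue you rightly flag).
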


\subsection{Solution to the problem (P)}

\begin{theorem}
\label{th1}If the connection $\gamma$ on $M\times _NC$ satisfies the
conditions $(C_M)$ and $(C_C)$ introduced above, then the vector field
$D^\gamma $ satisfies the property stated in \emph{Proposition
\ref{proposition2}} and, accordingly\emph{\ }the covariant Hamiltonian with
respect to $\gamma$ of every $\mathfrak{X}(N)$-invariant Lagrangian is also
$\mathfrak{X}(N)$-invariant.
\end{theorem}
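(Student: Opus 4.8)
The plan is to verify the criterion of Proposition~\ref{proposition2}, i.e.\ that $D^\gamma$ carries $\Gamma(\mathcal{D}_{M\times _NC})$ into itself. Since $[fZ,D^\gamma]=f[Z,D^\gamma]-(D^\gamma f)Z$ and the second summand already belongs to $\Gamma(\mathcal{D}_{M\times _NC})$ whenever $Z$ does, it is enough to treat the four generating brackets $[X^h,D^\gamma]$, $[X_h^i,D^\gamma]$, $[X_h^{ik},D^\gamma]$ and $[X_i^{jkh},D^\gamma]$, which have been recorded in \eqref{XhDgamma}, \eqref{XhiDgamma1}, \eqref{bracket3}, and in the identity $[X_i^{jkh},D^\gamma]=X_i^{jkh}$. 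The last one lies in $\mathcal{D}_{M\times _NC}$ by definition, so the whole task is to substitute the conditions $(C_M)$ and $(C_C)$ into the remaining three and recognise each outcome as a $C^\infty$-combination of the generators.

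First I would isolate the structural fact that makes everything close: by \eqref{Xijkh} the field $X_i^{jkh}$ is (six times) the total symmetriser in its three lower indices, so an expression $c_{jkl}^i\,\partial/\partial A_{jk,l}^i$ lies in $\mathcal{D}_{M\times _NC}$ as soon as $c_{jkl}^i$ is totally symmetric in $(j,k,l)$. Now $(C_M)$ presents $\gamma_{ijk}$, through \eqref{gamma_klj}, as a polynomial in the $y$'s and $A$'s with no explicit dependence on $x$, while $(C_C)$, through \eqref{28} and \eqref{29}, fixes the part of $\gamma_{jkl}^i$ that is not totally symmetric in $(j,k,l)$ as an $x$-free polynomial in the $A$'s, leaving only the totally symmetric part $\gamma_{(jkl)}^i$ undetermined. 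Hence in \eqref{XhDgamma} the $y$-derivatives vanish and $[X^h,D^\gamma]=(\partial\gamma_{(jkl)}^i/\partial x^h)\,\partial/\partial A_{jk,l}^i$; the coefficient being symmetric in $(j,k,l)$, this bracket lies in $\mathcal{D}_{M\times _NC}$.

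For $[X_h^i,D^\gamma]$ and $[X_h^{ik},D^\gamma]$ I would split $\gamma_{jkl}^i=\gamma_{(jkl)}^i+\gamma_{jkl}^{i,\mathrm{ns}}$ into its symmetric and (determined, $x$-free) non-symmetric parts and substitute \eqref{gamma_klj}, \eqref{28}, \eqref{29}. The symmetric part contributes, in the $A$-block of \eqref{XhiDgamma1}, the term $Y_h^i(\gamma_{(bcr)}^a)$ together with the inhomogeneous terms $-\delta_a^h\gamma_{(bcr)}^i+\delta_i^c\gamma_{(bhr)}^a+\delta_i^b\gamma_{(hcr)}^a+\delta_i^r\gamma_{(bch)}^a$; using the symmetry of $\gamma_{(\cdot\cdot\cdot)}^\cdot$ one checks that the combined coefficient of $\partial/\partial A_{bc,r}^a$ is again totally symmetric in $(b,c,r)$, so this whole contribution lands in the span of the $X_\bullet^{\bullet\bullet\bullet}$, and similarly for \eqref{bracket3}. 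The remaining, determined part together with the metric terms $Y_h^i(\gamma_{abk})$, $Y_h^{ik}(\gamma_{abc})$ must then be reorganised, by contracting the Kronecker deltas and invoking the curvature/torsion relations \eqref{28}, \eqref{29}, into a combination of the generators; here the inhomogeneous $\delta$-terms are precisely what is needed to absorb the derivatives $Y_h^i(\gamma_{bcr}^{a,\mathrm{ns}})$ and rebuild the $A$- and $y$-blocks of $X_h^i$ and $X_h^{ik}$.

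The hard part will be this last bookkeeping in \eqref{XhiDgamma1}, the longest of the brackets: one must carry $Y_h^i$ through both the quadratic metric law \eqref{gamma_klj} and the quadratic relations \eqref{28}, \eqref{29}, and verify that after reindexing no monomial survives outside the span of $X^h,X_h^i,X_h^{ik},X_i^{jkh}$---in particular that the order-$0$ pieces generated along the way cancel, since the brackets are purely first order. Conceptually the computation is forced to succeed because $(C_M)$ and $(C_C)$ are natural, diffeomorphism-equivariant prescriptions---built from the horizontal lift and the projection $q$, from the curvature $R^{\nabla^{(1)}}$ and from $\nabla^{(2)}\tau_N$---so the only freedom in $\gamma$ is the totally symmetric tensor $\gamma_{(jkl)}^i$, and by the structural fact above that freedom can only generate terms along the symmetrisers $X_i^{jkh}$, which already lie in $\mathcal{D}_{M\times _NC}$.
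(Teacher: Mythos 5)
Your setup coincides with the paper's: reduce to the four generating brackets via the Leibniz rule, note that $[X_i^{jkh},D^\gamma]=X_i^{jkh}$, and exploit the fact that a vertical field $c_{jkl}^i\,\partial/\partial A_{jk,l}^i$ lies in $\mathcal{D}_{M\times _NC}$ as soon as $c_{jkl}^i$ is totally symmetric in $(j,k,l)$, since it is then the combination $\sum_{j<k<l}c_{jkl}^iX_i^{jkl}+\tfrac{1}{2}\sum_{j<k}(\dots)+\tfrac{1}{6}(\dots)$ of the symmetrisers. Your treatment of $[X^h,D^\gamma]$ and of the free totally symmetric part $\gamma_{(jkl)}^i$ is correct and matches the paper. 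The gap is that the two brackets carrying all the real content, $[X_h^i,D^\gamma]$ and $[X_h^{ik},D^\gamma]$, are never actually verified: you describe what ``must'' happen and defer it as bookkeeping. This is precisely where the paper spends its effort: it first shows that under $(C_M)$ every $\partial/\partial y_{ab,c}$-component of these brackets cancels outright --- so nothing ``rebuilds the $y$-blocks of $X_h^i$ and $X_h^{ik}$,'' contrary to what you anticipate; after $(C_M)$ the brackets are pure $\partial/\partial A_{bc,r}^a$ --- and then, by differentiating \eqref{28} and \eqref{29} with respect to the fibre coordinates $A_{jk}^s$, that the resulting coefficients $(T_h^i)_{bcr}^a$ and $(T_h^{ik})_{rbc}^a$ have the required total symmetry in their lower indices. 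Your mis-prediction of the shape of the answer is itself a sign that the computation was not carried out.

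The naturality argument you offer in place of the computation is not a proof as it stands. A rigorous version would run: the class of connections cut out by $(C_M)$ and $(C_C)$ is stable under the natural $\mathrm{Diff}N$-action and is an affine bundle modelled on $(p_C)^\ast\bigl(S^3T^\ast N\otimes TN\bigr)$ (Theorem \ref{th2}); hence $L_{\bar{X}^{(1)}}D^\gamma$ is the vertical lift of a totally symmetric difference tensor and lies in the span of the $X_i^{jkh}$. That would be a genuinely different and attractive route, but the $\mathrm{Diff}N$-stability of the class --- equivalently, the equivariance of the prescriptions $(C_M)$ and $(C_C)$ --- is neither stated nor proved in your argument, and establishing it is essentially of the same order of difficulty as the bracket computations you are trying to avoid. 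As written, ``the computation is forced to succeed because the prescriptions are natural'' assumes what is to be proved.
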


\begin{proof}
When $\gamma _M$ satisfies the condition $(C_M)$ the brackets
\eqref{XhDgamma}, \eqref{XhiDgamma1}, and \eqref{bracket3}
are respectively given by
\begin{equation}
\left[
X^h,D^\gamma
\right]
=\frac{\partial\gamma _{jkl}^i}{\partial x^h}
\frac{\partial}{\partial A_{jk,l}^i},
\label{XhDgammaC}
\end{equation}
\begin{equation}
\left[
X_h^i,D^\gamma
\right]
=\left(
Y_h^i
\left(
\gamma _{bcr}^a
\right)
-\delta _a^h\gamma _{bcr}^i+\delta_i^c
\gamma _{bhr}^a+\delta_i^b\gamma _{hcr}^a
+\delta_i^r\gamma _{bch}^a
\right)
\dfrac{\partial}{\partial A_{bc,r}^a},
\label{XhiDgamma1C}
\end{equation}
\begin{align*}
\left[
X_h^{ik},D^\gamma
\right]
& =\left(
-\frac{\partial\gamma _{abc}^d}{\partial A_{ik}^h}
+\delta_i^c
\left(
\delta_d^h
A_{ab}^k-\delta_b^kA_{ah}^d-\delta_a^kA_{hb}^d
\right)
\right.
\\ & \left.
-\frac{\partial\gamma _{abc}^d}{\partial A_{ki}^h}+\delta_k^c
\left(
\delta_d^hA_{ab}^i-\delta_b^iA_{ah}^d-\delta_a^iA_{hb}^d
\right)
\right)
\dfrac{\partial}{\partial A_{ab,c}^d}.
\end{align*}
In addition, if $\gamma _C$ satisfies the condition $(C_C)$, then taking
derivatives with respect to $x^h$ in \eqref{28} and \eqref{29} we obtain
\[
\frac{\partial\gamma _{klj}^i}{\partial x^h}
=\frac{\partial\gamma _{jlk}^i}{\partial x^h},
\quad
\frac{\partial\gamma _{jkl}^i}{\partial x^h}
=\frac{\partial\gamma _{kjl}^i}{\partial x^h},
\]
and renaming indices we deduce
\[
\begin{array}
[c]{l}
\dfrac{\partial\gamma _{jjk}^i}{\partial x^h}
=\dfrac{\partial\gamma _{jkj}^i}{\partial x^h}
=\dfrac{\partial\gamma _{kjj}^i}{\partial x^h}
\; (j<k),
\medskip\\
\dfrac{\partial\gamma _{kkj}^i}{\partial x^h}
=\dfrac{\partial\gamma _{kjk}^i}{\partial x^h}
=\dfrac{\partial\gamma _{jkk}^i}{\partial x^h}
\; (j<k),
\medskip\\
\dfrac{\partial\gamma _{jkl}^i}{\partial x^h}
=\dfrac{\partial\gamma _{klj}^i}{\partial x^h}
=\dfrac{\partial\gamma _{ljk}^i}{\partial x^h}
=\dfrac{\partial\gamma _{kjl}^i}{\partial x^h}
=\dfrac{\partial\gamma _{lkj}^i}{\partial x^h}
=\dfrac{\partial\gamma _{jlk}^i}{\partial x^h}
\; (j<k<l).
\end{array}
\]
From \eqref{XhDgammaC} we obtain
\begin{align*}
\left[
X^h,D^\gamma
\right]
& =\sum_{j<k<l}
\frac{\partial\gamma _{jkl}^i}{\partial x^h}X_i^{jkl}
+\tfrac{1}{2}\sum_{j<k}
\frac{\partial\gamma _{jjk}^i}{\partial x^h}X_i^{jjk}\\
& +\tfrac{1}{2}\sum_{j<k}
\frac{\partial\gamma _{kkj}^i}{\partial x^h} X_i^{kkj}
+\tfrac{1}{6}
\frac{\partial\gamma _{jjj}^i}{\partial x^h}X_i^{jjj},
\end{align*}
and consequently the values of $\left[  X^h,D^\gamma \right] $ belong to
the distribution $\mathcal{D}_{M\times _NC}$.

Moreover, as $\gamma _C$ is assumed to be defined on $C$, we have
\[
Y_h^i
\left(
\gamma _{bcr}^a
\right)
=\left(
\delta_h^sA_{jk}^i-\delta_k^iA_{jh}^s-\delta_j^iA_{hk}^s
\right)
\dfrac{\partial\gamma _{bcr}^a}{\partial A_{jk}^s}.
\]
For the sake of simplicity, below we set
\[
\left(
T_h^i
\right)  _{bcr}^a
\!=\!A_{jk}^i
\dfrac{\partial\gamma _{bcr}^a}{\partial A_{jk}^h}
\!-\!A_{jh}^s
 \dfrac{\partial\gamma _{bcr} ^a}{\partial A_{ji}^s}
 \!-\!A_{hk}^s
 \dfrac{\partial\gamma _{bcr}^a}{\partial A_{ik}^s}
 \!-\!\delta_a^h\gamma _{bcr}^i
 \!+\!\delta_i^b\gamma _{hcr}^a
 \!+\!\delta_i^c\gamma _{bhr}^a
 \!+\!\delta_i^r\gamma _{bch}^a.
\]
Taking derivatives with respect to $A_{jk}^s$,
the equations \eqref{28} y \eqref{29} yield
\[
\frac{\partial\gamma _{bcr}^a}{\partial A_{jk}^s}
-\frac{\partial \gamma _{rcb}^a}{\partial A_{jk}^s}
=\delta_r^j\delta_s^aA_{bc}^k
-\delta_b^j\delta_s^aA_{rc}^k
+\delta_b^j\delta_C^kA_{rs}^a
-\delta_r^j\delta_C^kA_{bs}^a,
\]
\begin{align*}
\!\!\dfrac{\partial\gamma _{rbc}^a}{\partial A_{jk}^s}
\!-\!\dfrac{\partial\gamma _{brc}^a}{\partial A_{jk}^s}
\! & =\!\delta_C^j\delta_s^aA_{rb}^k
\!-\!\delta_s^a\delta_C^jA_{br}^k
\!-\!\delta_s^a\delta_b^kA_{cr}^j
\!-\!\delta_s^a\delta_r^jA_{cb}^k
\!+\!\delta_s^a\delta_r^kA_{cb}^j
\!+\!\delta_s^a\delta_b^jA_{cr}^k\\
& \!\!+\delta_C^j\delta_b^kA_{sr}^a
\!-\!\delta_C^j\delta_r^kA_{sb}^a
\!+\!\delta_r^j\delta_b^kA_{cs}^a
\!-\!\delta_b^j\delta_r^kA_{cs}^a
\!+\!\delta_C^j\delta_r^kA_{bs}^a
\!-\!\delta_C^j\delta_b^kA_{rs}^a.
\end{align*}
From these expressions, the following symmetries of indices are obtained:
\[
\begin{array}
[c]{l}
\left(  T_h^i\right)  _{bbc}^a
=\left(  T_h^i\right)  _{bcb}^a
=\left(  T_h^i\right)  _{cbb}^a\;(b<c),
\medskip\\
\left(  T_h^i\right)  _{bcc}^a
=\left(  T_h^i\right)  _{cbc}^a
=\left(  T_h^i\right)  _{ccb}^a\;(b<c),
\medskip\\
\left(  T_h^i\right)  _{bcd}^a
=\left(  T_h^i\right)  _{dbc}^a
=\left(  T_h^i\right)  _{cdb}^a
=\left(  T_h^i\right)  _{bdc}^a
=\left(  T_h^i\right)  _{dcb}^a
=\left(  T_h^i\right)  _{cbd}^a
\; (b<c<d),
\end{array}
\]
and from \eqref{XhiDgamma1C} we obtain
\begin{align*}
\left[  X_h^i,D^\gamma \right]
& =\sum_{b<c<d}
\left(
T_h^i
\right) _{bcd}^aX_a^{bcd}
+\tfrac{1}{2}\sum_{b<c}
\left(
T_h^i
\right) _{bbc}^aX_a^{bbc}\\
& +\tfrac{1}{2}\sum_{b<c}
\left(
T_h^i
\right)  _{ccb}^aX_a^{ccb}
+\tfrac{1}{6}
\left(
T_h^i
\right)  _{bbb}^aX_a^{bbb}.
\end{align*}
Hence $\left[  X_h^i,D^\gamma \right]  $ also takes values into the
distribution $\mathcal{D}_{M\times _NC}$.

The proof for the third bracket is similar to the previous
two cases but longer. Letting
\begin{align*}
\left(
T_h^{ik}
\right) _{rbc}^a
& =-\frac{\partial\gamma _{rbc}^a}{\partial A_{ik}^h}
-\frac{\partial\gamma _{rbc}^a}{\partial A_{ki}^h}
+\delta_i^c
\left(
\delta_a^hA_{rb}^k
-\delta_b^kA_{rh}^a
-\delta_r^kA_{hb}^a
\right) \\
& \qquad\qquad\qquad\qquad+\delta_k^c
\left(
\delta_a^hA_{rb}^i-\delta_b^iA_{rh}^a
-\delta_r^iA_{hb}^a
\right)  ,
\end{align*}
the following symmetries are obtained:
\[
\begin{array}
[c]{l}
\left(
T_h^{ik}
\right) _{bbc}^a
=\left(
T_h^{ik}
\right) _{bcb}^a
=\left(
T_h^{ik}
\right) _{cbb}^a
(b<c),
\medskip\\
\left(
T_h^{ik}\right) _{bcc}^a
=\left(
T_h^{ik}\right) _{cbc}^a
=\left(
T_h^{ik}
\right) _{ccb}^a
(b<c),
\medskip\\
\left(
T_h^{ik}
\right)  _{bcd}^a
=\left(
T_h^{ik}
\right) _{dbc}^a
=\left(
T_h^{ik}
\right) _{cdb}^a
=\left(
T_h^{ik}
\right)
_{bdc}^a
=\left(
T_h^{ik}
\right) _{dcb}^a
=\left(
T_h^{ik}
\right) _{cbd}^a
(b<c<d).
\end{array}
\]
Hence
\begin{align*}
\left[  X_h^{ik},D^\gamma
\right]
& =\sum_{b<c<d}
\left(
T_h^{ik}
\right)  _{bcd}^aX_a^{bcd}
+\tfrac{1}{2}\sum_{b<c}
\left(
T_{h}^{ik}
\right) _{bbc}^aX_a^{bbc}\\
& +\tfrac{1}{2}\sum_{b<c}
\left(  T_h^{ik}
\right)  _{ccb}^aX_a^{ccb}
+\tfrac{1}{6}
\left(
T_h^{ik}
\right) _{bbb}^aX_a^{bbb},
\end{align*}
and the proof is complete.
\end{proof}

\begin{theorem}
\label{th2}The Ehresmann connections on $C$ satisfying
the equations \emph{\eqref{b}} and \emph{\eqref{c}}
are the sections of an affine bundle over $C$ modelled
over the vector bundle
$\left( p_C\right) ^\ast \left( S^{3}T^\ast N\otimes TN\right) $.
Consequently, there always exist Ehresmann
connections on $M\times _NC$ fulfilling the conditions
$(C_M)$ and $(C_C)$ introduced above.
\end{theorem}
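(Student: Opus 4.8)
The plan is to reduce the statement to fibrewise linear algebra on the coefficients of $\gamma _C$ and then to settle existence by a compatibility argument. Two Ehresmann connections on the affine bundle $p_C\colon C\to N$ differ by a $V(p_C)$-valued $p_C$-horizontal $1$-form; under the natural identification $V(p_C)\cong(p_C)^\ast(\otimes^2T^\ast N\otimes TN)$, such a difference is a section of $(p_C)^\ast(\otimes^3T^\ast N\otimes TN)$, whose components are exactly the $\gamma _{jkl}^i$ of \eqref{gamma_C}, the first two lower indices $j,k$ being those of the fibre coordinate $A_{jk}^i$ and the third index $l$ the horizontal one. By Proposition~\ref{proposition3} the condition $(C_C)$ is locally the inhomogeneous linear system \eqref{28}--\eqref{29} for these components, the inhomogeneous terms being the quadratic expressions in the fibre coordinates $A_{jk}^i$. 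First I would fix a point $\Gamma _x\in C$ and treat \eqref{28}--\eqref{29} as a linear system in the $n^4$ unknowns $\gamma _{jkl}^i$.

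For the model bundle I would analyse the homogeneous system. Writing $\mathrm{alt}_{13}$ and $\mathrm{alt}_{12}$ for the alternations in the index pairs $(1,3)$ and $(1,2)$ of the three lower slots, equations \eqref{28} and \eqref{29} prescribe $\mathrm{alt}_{13}$ and $\mathrm{alt}_{12}$ of $\gamma$, so a homogeneous solution is a tensor symmetric in $(j,l)$ and in $(j,k)$. Since the transpositions $(1\,3)$ and $(1\,2)$ generate the whole symmetric group $S_3$ acting on the three lower indices, such a tensor is symmetric under all of $S_3$, i.e.\ totally symmetric. Hence the space of homogeneous solutions is $S^3T_x^\ast N\otimes T_xN$. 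Consequently, whenever it is non-empty, the solution set of \eqref{28}--\eqref{29} is an affine subspace directed by $S^3T_x^\ast N\otimes T_xN$; as the construction is natural in $x$ these fibrewise affine spaces glue to an affine subbundle of the bundle of Ehresmann connections on $C$, modelled on $(p_C)^\ast(S^3T^\ast N\otimes TN)$, which is the first assertion.

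The main obstacle is existence, i.e.\ that the prescribed right-hand sides actually lie in the image of $\gamma\mapsto(\mathrm{alt}_{12}\gamma,\mathrm{alt}_{13}\gamma)$. A rank count locates where the obstruction lives: working in the three lower indices only, the kernel of this map is $S^3$, of dimension $\binom{n+2}{3}$, the target has dimension $2\binom{n}{2}n$, and so the cokernel has dimension $\binom{n}{3}$, matching $\wedge^3T^\ast N$. Thus there is a single totally antisymmetric compatibility condition, and I would identify it with the first Bianchi identity with torsion relating the curvature $R^{\nabla^{(1)}}$ of \eqref{b} to the covariant derivative $\nabla^{(2)}\tau_N$ of \eqref{c} together with the quadratic torsion terms on the right of \eqref{28}. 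Since that identity holds for every linear connection, the geometric data \eqref{b}--\eqref{c} are compatible and a solution exists; alternatively one checks the compatibility directly by substituting the explicit quadratic expressions in $A_{jk}^i$ and verifying that their totally antisymmetric part matches.

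Finally I would assemble the two components. By Proposition~\ref{proposition0} the condition $(C_M)$ determines $\gamma _M$ uniquely through the explicit formula \eqref{gamma_klj}, which is a well-defined smooth section and hence always available. Taking $\gamma=(\gamma _M,\gamma _C)$ with this $\gamma _M$ and with $\gamma _C$ any section of the affine bundle just constructed yields an Ehresmann connection on $M\times _NC$ satisfying both $(C_M)$ and $(C_C)$, proving the concluding existence statement.
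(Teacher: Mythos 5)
Your treatment of the homogeneous system and of the $(C_M)$ component coincides with the paper's proof: the difference of two solutions of \eqref{28}--\eqref{29} is symmetric under the transpositions $(1\,3)$ and $(1\,2)$ of its three covariant slots, hence totally symmetric, so each non-empty fibre is an affine space modelled on $S^3T_x^\ast N\otimes T_xN$, while \eqref{gamma_klj} gives $\gamma _M$ explicitly. You have also put your finger on the one step the paper's proof passes over in silence: before quoting the fact that an affine bundle admits global sections one must know that the fibres are non-empty, and your cokernel count correctly locates the obstruction in $\wedge^3T^\ast N\otimes TN$ --- the compatibility condition is that the totally antisymmetric parts of the two right-hand sides agree.

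The gap is that you never verify this condition, and for the system as written it fails. Writing $P_{str}^{h}$, $Q_{rst}^{h}$ for the right-hand sides of \eqref{28}, \eqref{29} and $T_{jk}^{i}=A_{jk}^{i}-A_{kj}^{i}$, a direct computation gives
\begin{equation*}
\mathrm{Alt}(Q)_{abc}^{h}-\mathrm{Alt}(P)_{abc}^{h}
=\tfrac{1}{3}\left(
T_{bc}^{m}T_{am}^{h}+T_{ca}^{m}T_{bm}^{h}+T_{ab}^{m}T_{cm}^{h}
\right),
\end{equation*}
which is exactly the quadratic term $\mathfrak{S}\,T(T(\cdot,\cdot),\cdot)$ that the first Bianchi identity with torsion carries in addition to $\mathfrak{S}\,(\nabla T)$ --- so the identity you appeal to produces the obstruction rather than killing it. Concretely, for $n=3$, $A_{12}^{1}=A_{23}^{2}=1$ and all other $A_{jk}^{i}=0$, summing \eqref{28} and \eqref{29} over the six permutations of $(1,2,3)$ telescopes the left-hand sides to $0$ while the right-hand sides sum to $-1$; the fibre over such a $\Gamma _x$ is then empty. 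Hence your existence argument (and, as far as I can see, the affine-bundle claim over all of $C$) only survives on $C^{\mathrm{sym}}$, where $T=0$ makes both alternations vanish --- which is the only case the rest of the paper actually uses --- unless the discrepancy can be traced to a missing $T\cdot T$ term in the passage from \eqref{b}--\eqref{c} to \eqref{28}--\eqref{29} in Proposition \ref{proposition3}, which is stated without proof. Either way, ``check that the antisymmetric parts match'' is not a formality to be deferred; it is the step where the argument breaks.
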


\begin{proof}
If two Ehresmann connections $\gamma _C,\gamma _C^\prime $
satisfy the equations \eqref{b} and \eqref{c},
then the difference tensor field
$t=\gamma _C^\prime -\gamma _C$, which is a section of the bundle
$\left( p_C\right) ^\ast \left( \otimes^3T^\ast N\otimes TN\right) $,
satisfies the following symmetries:
\begin{align}
t(X_1,X_2,X_3)  & =t(X_3,X_2,X_1),\label{1-3}\\
t(X_1,X_2,X_3)  & =t(X_2,X_1,X_3),\label{1-2}
\end{align}
according to \eqref{28}, \eqref{29}, respectively, for all
$X_1,X_2,X_3\in  T_xN$, $\Gamma _x\in  C_x(N)$. Hence
\[
t(X_1,X_3,X_2)
\overset{\mathrm{\eqref{1-3}}}{=}t(X_2,X_3,X_1)
\overset{\mathrm{\eqref{1-2}}}{=}t(X_3,X_2,X_1)
\overset{\mathrm{\eqref{1-3}}}{=}t(X_1,X_2,X_3),
\]
thus proving that $t$ is totally symmetric. The second part
of the statement thus follows from the fact that an affine
bundle always admits global sections, e.g., see
\cite[I, Theorem 5.7]{KN}.
\end{proof}

\begin{remark}
The results obtained above also hold if the bundle
of linear connections is replaced by the subbundle
$C^{\mathrm{sym}}=C^{\mathrm{sym}}(N)\subset C$ of
symmetric linear connections; the only difference
to be observed between both bundles is that in the
symmetric cases the equation \eqref{c}, or equivalently
\eqref{29}, holds automatically.
\end{remark}

\section{The second-order formalism}

In this section we consider the problem
of invariance of covariant Hamiltonians
for second-order Lagrangians defined on
the bundle of metrics, i.e., for functions
$\mathcal{L}\in  C^\infty (J^2M)$, where $M$ denotes, as
throughout this paper, the bundle of pseudo-Riemannian
metrics of a given signature $(n^{+},n^{-})$ on $N$.

\subsection{Second-order Ehresmann connections\label{ss2.5}}

A second-order Ehresmann connection on
$p\colon E\to N$ is a differential $1$-form $\gamma^2$ on
$J^1E$ taking values in the vertical
sub-bundle $V(p^1)$ such that $\gamma^2(X)=X$ for every
$X\in  V(p^1)$. (We refer the reader to \cite{MR3} for
the basics on Ehresmann connections of arbitrary order.)
Once a connection $\gamma^2$ is given, we have a
decomposition of vector bundles
$T(J^1E)=V(p^1)\oplus\ker\gamma^2$,
where $\ker\gamma^2$ is called the horizontal
sub-bundle determined by $\gamma^2$. In the coordinate
system on $J^1E$ induced from a fibred coordinate
system $(x^j,y^\alpha )$ for $p$, a connection form
can be written as
\begin{equation}
\gamma^2
=(dy^\alpha +\gamma _j^\alpha dx^j)\otimes
\frac{\partial}{\partial y^\alpha }
+(dy_i^\alpha
+\gamma _{ij}^\alpha dx^j)\otimes
\frac{\partial}{\partial y_i^\alpha },
\quad\gamma _j^\alpha ,\gamma _{ij}^\alpha \in  C^\infty (J^1E).
\label{f2}
\end{equation}
As in the first-order case, the action of the group
$\mathrm{Aut}(p)$ on the space of second-order connections
is defined by the formula
\[
\Phi\cdot\gamma^2=
\left(  \Phi ^{(1)}
\right)  _\ast
\circ \gamma ^2\circ
\left(  \Phi ^{(1)}
\right)  _\ast^{-1},
\quad
\forall \Phi \in  \mathrm{Aut}(p).
\]
As $\Phi^{(1)}\colon J^1M\to J^1M$ is a morphism
of fibred manifolds over $N$,
$(\Phi^{(1)})_\ast$ transforms the vertical
subbundle $V(p^1)$ into itself;
hence the previous definition makes sense.

\subsection{A remarkable isomorphism}

\begin{theorem}
\label{one-to-one}Let $\Gamma^g$ be the
Levi-Civita connection of a pseudo-Riemannian metric
$g$ on $N$. The mapping
$\zeta_N\colon J^1M\to M\times _NC^{\mathrm{sym}}$,
$\zeta_N(j_x^1g)=(g_x,\Gamma _x^g)$ is a diffeomorphism.
There is a natural one-to-one
correspondence between first-order Ehresmann connections
on the bundle $p\colon M\times _NC^{\mathrm{sym}}\to N$
and second-order Ehresmann connections on the bundle
$p_M\colon M\to N$, which is explicitly
given by,
\begin{equation}
\gamma^2=
\left(
\left(
\zeta_N^v
\right)
_\ast
\right) ^{-1}
\circ\gamma\circ
\left(
\zeta_N
\right)  _\ast,
\label{gamma2}
\end{equation}
where
$\gamma\colon T(M\times _NC^{\mathrm{sym}})\to V(p)$
is a first-order Ehresmann connection,
\[
\left(
\zeta_N
\right)
_\ast\colon T(J^1M)\to T(M\times _NC^{\mathrm{sym}})
\]
is the Jacobian mapping induced by $\zeta_N$, and
$\left( \zeta_N^v\right) _\ast\colon V(p_{M}^1)\to V(p)$
is its restriction to the vertical bundles.
\end{theorem}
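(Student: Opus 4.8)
The plan is to treat the two assertions separately: first that $\zeta_N$ is a diffeomorphism, and then that the transport formula \eqref{gamma2} sets up the claimed bijection by essentially formal linear algebra on the tangent bundles.

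For the first assertion I would compute $\zeta_N$ in the induced coordinates $(x^i,y_{ij},y_{ij,k})$ on $J^1M$ and $(x^i,y_{ij},A^i_{jk})$, $A^i_{jk}=A^i_{kj}$, on $M\times_NC^{\mathrm{sym}}$. Since $\zeta_N$ covers the identity on $N$ and sends $j^1_xg$ to $(g_x,\Gamma^g_x)$, it is given by $x^i\circ\zeta_N=x^i$, $y_{ij}\circ\zeta_N=y_{ij}$, and, by the Christoffel formula, $A^i_{jk}\circ\zeta_N=\tfrac12\,y^{il}(y_{lj,k}+y_{lk,j}-y_{jk,l})$, where $(y^{il})$ is the inverse matrix of $(y_{ij})$; in particular $\Gamma^g_x$ depends only on $j^1_xg$, so $\zeta_N$ is well defined, and the expression is smooth on the open set of invertible metrics. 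The decisive point is invertibility: the metricity condition $\nabla g=0$ together with torsion-freeness gives, at a point, $y_{ij,k}=y_{mj}A^m_{ik}+y_{im}A^m_{jk}$, which recovers the derivative coordinates smoothly from $(y_{ij},A^i_{jk})$. A one-line substitution shows these two formulas are mutually inverse fibrewise — this is exactly the fundamental theorem of pseudo-Riemannian geometry read off jet by jet — and since both fibres have dimension $n\binom{n+1}{2}$ with all coefficients smooth in $(y_{ij})$, the map $\zeta_N$ is a diffeomorphism. The dimension check $\dim J^1M=\dim(M\times_NC^{\mathrm{sym}})=n+\tfrac12 n(n+1)^2$ serves as confirmation.

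For the second assertion the structural observation is that $\zeta_N$ is a diffeomorphism over $N$, so its Jacobian $(\zeta_N)_\ast$ carries the vertical subbundle $V(p^1_M)$ of $p^1_M\colon J^1M\to N$ isomorphically onto $V(p)$; in coordinates it maps the span of $\partial/\partial y_{ij},\partial/\partial y_{ij,k}$ into the span of $\partial/\partial y_{ij},\partial/\partial A^i_{jk}$. This is precisely the assertion that $(\zeta_N^v)_\ast\colon V(p^1_M)\to V(p)$ is an isomorphism, which is what makes the right-hand side of \eqref{gamma2} meaningful. Given a first-order Ehresmann connection $\gamma\colon T(M\times_NC^{\mathrm{sym}})\to V(p)$, I would set $\gamma^2=((\zeta_N^v)_\ast)^{-1}\circ\gamma\circ(\zeta_N)_\ast$ and verify the two defining properties: its image lies in $V(p^1_M)$ by construction, and for $X\in V(p^1_M)$ one has $(\zeta_N)_\ast X=(\zeta_N^v)_\ast X\in V(p)$, whence $\gamma((\zeta_N)_\ast X)=(\zeta_N^v)_\ast X$ because $\gamma$ is the identity on $V(p)$, giving $\gamma^2(X)=X$. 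Thus $\gamma^2$ is a genuine second-order Ehresmann connection.

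To obtain the one-to-one correspondence I would exhibit the inverse assignment explicitly, $\gamma=(\zeta_N^v)_\ast\circ\gamma^2\circ((\zeta_N)_\ast)^{-1}$, and check by the same computation that it turns a second-order connection on $M$ into a first-order connection on $M\times_NC^{\mathrm{sym}}$; the two assignments are patently inverse, so the correspondence is bijective. Conceptually this is conjugation by the bundle isomorphism $(\zeta_N)_\ast$, which carries horizontal complements of $V(p)$ bijectively onto horizontal complements of $V(p^1_M)$. The part I expect to be the main obstacle is the invertibility in the first assertion — one must be certain that restricting the target to the symmetric connections $C^{\mathrm{sym}}$ is both necessary, since the Levi-Civita connection is torsion-free, and sufficient, since every symmetric $\Gamma_x$ paired with $g_x$ is realized by a unique $1$-jet of metric; once this fibrewise inverse pair is pinned down, everything after it is formal.
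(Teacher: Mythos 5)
Your proof is correct. The first half (that $\zeta_N$ is a diffeomorphism) follows the paper's own route: the coordinate expression $A^i_{jk}\circ\zeta_N=\tfrac12 y^{il}(y_{lj,k}+y_{lk,j}-y_{jk,l})$ and its inverse $y_{ij,k}\circ\zeta_N^{-1}=y_{hi}A^h_{jk}+y_{hj}A^h_{ik}$ are exactly the formulas \eqref{A_circ_zeta} and \eqref{y_circ_zeta^-1} in the paper, and the fibrewise mutual-inverse check is the same. For the second half you diverge: the paper computes $\gamma^2(\partial/\partial x^r)$ explicitly in coordinates, obtains the components $\gamma_{ijkr}$ of $\gamma^2$ in terms of the components $\gamma_{abr},\gamma^h_{jkr}$ of $\gamma$ (formula \eqref{gamma_ijkr_zeta^-1}), and proves bijectivity by a Koszul-type cyclic permutation of indices that solves back for $\gamma^s_{ijr}$ (formula \eqref{gamma^s_ijr}); you instead argue structurally that, since $\zeta_N$ is a diffeomorphism over $\mathrm{id}_N$, its Jacobian restricts to an isomorphism $V(p^1_M)\to V(p)$, so conjugation by $(\zeta_N)_\ast$ carries $V(p)$-valued $1$-forms restricting to the identity on the vertical bundle bijectively onto $V(p^1_M)$-valued ones with the same property, with the inverse assignment written down explicitly. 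Your argument is shorter and works because the paper's definition of a second-order Ehresmann connection imposes no condition beyond being the identity on $V(p^1)$, so conjugation is all that is needed; what it does not deliver are the explicit coordinate formulas \eqref{gamma_ijkr_zeta^-1} and \eqref{gamma^s_ijr}, which the paper extracts from this proof and reuses later (in Lemma \ref{lemma2} and Theorem \ref{Theorem for P2}), so in the economy of the paper the computational proof is doing double duty.
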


\begin{proof}
As a computation shows, the equations of
$\zeta_N$ in the coordinate systems introduced
in the section \ref{Coordinate Systems}, are as follows:
\begin{align}
x^i\circ\zeta_N
& =x^i,
\nonumber\\
y_{ij}\circ\zeta_N
& =y_{ij},
\nonumber\\
A_{ij}^h\circ\zeta_N
& =\tfrac{1}{2}y^{hk}(y_{ik,j}+y_{jk,i}-y_{ij,k}),
\quad i\leq j,
\label{A_circ_zeta}
\end{align}
where $(y^{ij})_{i,j=1}^n$ is the inverse mapping of the matrix
$(y_{ij})_{i,j=1}^n$ and the functions $y_{ij}$ are defined in
\eqref{inducedsystemM(N)}. Hence
\begin{align}
x^i\circ\zeta_N^{-1}  & =x^i,
\nonumber\\
y_{ij}\circ\zeta_N^{-1}  & =y_{ij},
\nonumber\\
y_{ij,k}\circ\zeta_N^{-1}  & =y_{hi}A_{jk}^h+y_{hj}A_{ik}^h,
\quad i\leq j.
\label{y_circ_zeta^-1}
\end{align}
As the diffeomorphism $\zeta_N$ induces the identity
on the ground manifold $N$, it follows that the definition
of $\gamma^2$ in \eqref{gamma2} makes
sense and the following formulas are obtained:
\[
\gamma^2
\left(
\frac{\partial}{\partial x^r}
\right)
=\sum_{a\leq b}
\left(
\gamma _{abr}\circ\zeta_N
\right)
\frac{\partial}{\partial y_{ab}}
+\sum_{i\leq j}\gamma _{ijkr}
\frac{\partial}{\partial y_{ij,k}},
\]
\begin{align*}
\gamma _{ijkr}
& =\tfrac{1}{2}\sum _{a\leq b}
\tfrac{\delta _{ah}\delta _{bi}+\delta _{ai}\delta _{bh}}
{1+\delta _{hi}}
\left(
\gamma _{abr}\circ \zeta _N
\right)
y^{hl}(y_{jl,k}+y_{kl,j}-y_{jk,l})\\
& +\tfrac{1}{2}\sum _{a\leq b}
\tfrac{\delta _{ah}\delta _{bj}+\delta _{aj}\delta _{bh}}
{1+\delta _{hj}}
\left(
\gamma _{abr}\circ \zeta _N
\right)
y^{hl}(y_{il,k}+y_{kl,i}-y_{ik,l})\\
& +\sum _{j\leq a}
\tfrac{\delta _{ak}}{1+\delta _{jk}}
\left(
\gamma _{jar}^h\circ \zeta_N
\right)
y_{hi}
+\sum _{a\leq j}\tfrac{\delta _{ak}}{1+\delta _{jk}}
\left(
\gamma _{ajr}^h\circ \zeta_N
\right)
y_{hi}\\
& +\sum _{i\leq a}\tfrac{\delta _{ak}}{1+\delta _{ik}}
\left(
\gamma _{iar}^h\circ \zeta_N
\right)
y_{hj}+\sum _{a\leq i}\tfrac{\delta _{ak}}{1+\delta _{ik}}
\left(
\gamma _{air}^h\circ \zeta_N
\right)
y_{hj},
\end{align*}
where
\[
\gamma=\sum_{i\leq j}
\left(
dy_{ij}+\gamma _{ijk}dx^k
\right)
\otimes \dfrac{\partial}{\partial y_{ij}}
+\sum_{j\leq k}
\left(
dA_{jk}^i+\gamma _{jkl}^idx^l
\right)
\otimes\dfrac{\partial}{\partial A_{jk}^i},
\]
or equivalently,
\[
\gamma=\tfrac{1}{2-\delta_{ij}}\left(  dy_{ij}+\gamma _{ijk}dx^k\right)
\otimes\dfrac{\partial}{\partial y_{ij}}+\tfrac{1}{2-\delta_{jk}}\left(
dA_{jk}^i+\gamma _{jkl}^idx^l\right)  \otimes\dfrac{\partial}{\partial
A_{jk}^i},
\]
assuming $\gamma _{hir}=\gamma _{ihr}$ for $h>i$, and
$\gamma _{jkr}^h=\gamma _{kjr}^h$ for $j>k$.
Taking the symmetry $A_{jk}^i=A_{kj}^i$ into account, we obtain
\begin{align*}
\gamma _{ijkr}  & =\tfrac{1}{2}
\left(
\gamma _{hir}\circ\zeta_N
\right)
y^{hl}(y_{jl,k}+y_{kl,j}-y_{jk,l})\\
& +\tfrac{1}{2}
\left(
\gamma _{hjr}\circ\zeta_N
\right)
y^{hl}(y_{il,k}+y_{kl,i}-y_{ik,l})\\
& +
\left(
\gamma _{jkr}^h\circ\zeta_N
\right)  y_{hi}
+\left(
\gamma _{ikr}^h\circ\zeta_N
\right)  y_{hj}.
\end{align*}
Hence
\begin{equation}
\gamma _{ijkr}\circ\zeta_N^{-1}
=\gamma _{hir}A_{jk}^h
+\gamma _{hjr}A_{ik}^h
+\gamma _{jkr}^hy_{hi}
+\gamma _{ikr}^hy_{hj},\ i\leq j.
\label{gamma_ijkr_zeta^-1}
\end{equation}
Permuting the indices $i,j,k$ cyclically
on the previous equation, we have
\begin{equation}
\gamma _{ijr}^s=-\gamma _{hkr}A_{ij}^hy^{ks}
-\tfrac{1}{2}\left( \gamma _{ijkr}\circ\zeta_N^{-1}
-\gamma _{jkir}\circ\zeta_N^{-1}
-\gamma _{kijr}\circ\zeta_N^{-1}\right)  y^{ks},
\label{gamma^s_ijr}
\end{equation}
thus proving that the mapping $\gamma\mapsto\gamma^2$
defined in the statement, is bijective.
\end{proof}

\subsection{Covariant Hamiltonians for second-order Lagrangians\label{s5}}

The Legendre form of a second-order Lagrangian density
$\Lambda=Lv_n$ on the bundle $p\colon E\to N$
is the $V^\ast (p^1)$-valued $p^3$-horizontal $(n-1)$-form
$\omega_{\Lambda}$ on $J^3E$ locally given by
(e.g., see \cite{Gotay}, \cite{Mu2}, \cite{SCr}),
\[
\omega_{\Lambda}
=i_{\partial/\partial x^i}v_n\otimes
\left(
L_{\alpha }^{i0}dy^\alpha
+L_{\alpha}^{ij}dy_j^\alpha
\right) ,
\]
where
\begin{align}
L_{\alpha}^{ij}
&  =\tfrac{1}{2-\delta_{ij}}
\frac{\partial L}{\partial y_{ij}^\alpha },
\label{f11}\\
L_{\alpha}^i
&  =\frac{\partial L}{\partial y_i^\alpha }
-\sum_j\tfrac{1}{2-\delta_{ij}}D_j
\left(
\frac{\partial L}{\partial y_{ij}^\alpha }
\right)  ,
\label{f12}
\end{align}
and
\[
D_j=\frac{\partial }{\partial x^j}
+\sum_{I\in \mathbb{N}^n,
|I|=0}^\infty y_{I+(j)}^\alpha
\frac{\partial }{\partial y_I^\alpha }
\]
denotes the total derivative with respect
to the variable $x^j$.

The Poincar\'{e}-Cartan form attached to $\Lambda $
is then defined to be the ordinary $n$-form on $J^3E$
given by,
$\Theta _\Lambda
=(p_2^3)^\ast \theta ^2\wedge \omega _\Lambda +\Lambda $,
where $\theta^2$ is the second-order structure form
(cf.\ \cite[(0.36)]{SardaZak}) and the exterior product
of $(p_2^3)^\ast \theta^2$ and the Legendre form,
is taken with respect to the pairing induced by duality,
$V(p^1)\times _{J^1E}V^\ast (p^1)\to\mathbb{R}$.
The most outstanding difference with the
first-order case is that the Legendre and
Poincar\'{e}-Cartan forms associated
with a second-order Lagrangian density
are generally defined on $J^3E$, thus
increasing by one the order of the density.

Similarly to the first-order case
(see \cite{FGM}, \cite{MSh}), given a
second-order Lagrangian density $\Lambda $ on
$p\colon E\to N$ and a second-order connection
$\gamma^2$ on $p\colon E\to N$, by
subtracting $(p_2^3)^\ast \theta^2$ from
$(p_1^3)^\ast \gamma^2$ we obtain a $p^3$-horizontal
form, and we can define the corresponding
covariant Hamiltonian to be the Lagrangian
density $\Lambda^{\gamma^2}$ of third order,
\begin{equation}
\Lambda^{\gamma^2}=
\left(
(p_1^3)^\ast \gamma^2-(p_2^3)^\ast \theta^2
\right)
\wedge\omega_\Lambda-\Lambda.
\label{Lambda_gamma_2}
\end{equation}
Expanding on the right-hand side of the
previous equation, we obtain a decomposition of
$\Theta_\Lambda$ that generalizes the classical
formula for the Hamiltonian in Mechanics;
namely,
$\Theta_\Lambda=(p_1^3)^\ast \gamma^2\wedge
\omega_\Lambda-\Lambda^{\gamma^2}$.
With the same notations as in the formulas
\eqref{f2}, \eqref{f11}, \eqref{f12} the
following formula is deduced:
\begin{equation}
L^{\gamma ^2}
=(\gamma _i^\alpha +y_i^\alpha )L_\alpha^{i0}
+(\gamma _{hi}^\alpha +y_{hi}^\alpha
)L_{\alpha}^{ih}-L.
\label{f19}
\end{equation}
Because of the equation \eqref{f12},
$\Theta _\Lambda $ and $L^{\gamma ^2}$
are generally defined on $J^3E$.

\subsection{Invariant covariant Hamiltonians on $J^2M$}

\begin{lemma}
\label{lemma2}
If $\gamma$ is a first-order Ehresmann connection
on $M\times _NC^{\mathrm{sym}}$ satisfying
the conditions $(C_M)$, then the following equation
holds for the second-order Ehresmann connection
$\gamma ^2$ on $M$ given in the formula
\emph{\eqref{gamma2}}:
\[
\gamma _{abr}\circ \zeta _N=-y_{ab,r}.
\]

\end{lemma}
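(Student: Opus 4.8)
The plan is to combine the explicit solution of the condition $(C_M)$ furnished by Proposition \ref{proposition0} with the coordinate description of the isomorphism $\zeta_N$ from Theorem \ref{one-to-one}, and then to exploit the contraction of the metric with its inverse. First I would invoke Proposition \ref{proposition0}: the hypothesis that $\gamma$ satisfies $(C_M)$ is equivalent to the closed-form expression \eqref{gamma_klj} for the metric-block coefficients of $\gamma$, which after relabelling the free indices $(k,l,j)\mapsto(a,b,r)$ and renaming the summation index reads
\[
\gamma_{abr}=-\left(y_{cb}A^c_{ra}+y_{ca}A^c_{rb}\right).
\]
Since this is an identity between smooth functions on $M\times_N C^{\mathrm{sym}}$, it may be composed with $\zeta_N$ termwise, which is exactly what the statement requires.

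Next I would substitute the equations of $\zeta_N$ recorded in \eqref{A_circ_zeta}. One has $y_{ij}\circ\zeta_N=y_{ij}$, while the Levi-Civita expression
\[
A^c_{ra}\circ\zeta_N=\tfrac{1}{2}y^{cm}\left(y_{rm,a}+y_{am,r}-y_{ra,m}\right)
\]
is symmetric under the interchange of its two lower indices (the metric being symmetric), so although \eqref{A_circ_zeta} is stated for ordered indices only, it applies to the unordered pairs $(r,a)$ and $(r,b)$ occurring here. The crucial simplification is then the contraction $y_{cb}\,y^{cm}=\delta^m_b$, which collapses each Christoffel-type term into a sum of three first derivatives of the metric:
\[
y_{cb}\left(A^c_{ra}\circ\zeta_N\right)=\tfrac{1}{2}\left(y_{rb,a}+y_{ab,r}-y_{ra,b}\right),\qquad y_{ca}\left(A^c_{rb}\circ\zeta_N\right)=\tfrac{1}{2}\left(y_{ra,b}+y_{ba,r}-y_{rb,a}\right).
\]

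Adding these two expressions, the $y_{rb,a}$ terms cancel against $-y_{rb,a}$ and the $-y_{ra,b}$ term cancels against $y_{ra,b}$, while the two surviving copies $y_{ab,r}$ and $y_{ba,r}=y_{ab,r}$ combine to leave precisely $y_{ab,r}$; carrying the overall minus sign of \eqref{gamma_klj} then gives $\gamma_{abr}\circ\zeta_N=-y_{ab,r}$, as claimed. I do not expect a genuine obstacle here: the whole argument is a short bookkeeping computation whose only delicate points are matching the dummy index of \eqref{gamma_klj} correctly and using the symmetry $y_{ij,k}=y_{ji,k}$ of the metric derivatives consistently, so as to track the two cancelling families of terms and identify $y_{ba,r}$ with $y_{ab,r}$.
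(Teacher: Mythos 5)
Your proposal is correct and follows essentially the same route as the paper's own proof: substitute the closed form \eqref{gamma_klj} supplied by Proposition \ref{proposition0}, compose with $\zeta_N$ using \eqref{A_circ_zeta}, and contract $y_{cb}y^{cm}=\delta^m_b$ so that the Christoffel-type terms telescope to $-y_{ab,r}$. Your explicit remark that \eqref{A_circ_zeta} extends to unordered lower index pairs by the symmetry of the Levi-Civita expression is a detail the paper leaves implicit, but the computation is identical.
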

\begin{proof}
Actually, from the formulas \eqref{gamma_klj}
and \eqref{A_circ_zeta} we obtain
\begin{align*}
\! \gamma _{abr}\circ \zeta _N\!
& =\!-\left(
y_{mb}
\left(
A_{ra}^m\circ \zeta _N
\right)
\!+\!y_{ma}
\left(
A_{rb}^m\circ \zeta _N
\right)
\right) \\
\! & =\!-\tfrac{1}{2}
\left\{
y_{mb}y^{mk}(y_{rk,a}\!+\!y_{ak,r}
\!-\!y_{ra,k})
\!+\! y_{ma}y^{mk}(y_{rk,b}
\!+\! y_{bk,r}
\!-\! y_{rb,k})
\right\} \\
\! & =\!-y_{ab,r}.
\end{align*}

\end{proof}

\begin{lemma}
\label{lemma3}If a first-order connection $\gamma$ on
$M\times _NC^{\mathrm{sym}}$ satisfies the condition $(C_C)$
introduced above, then the
following formulas for its components hold:
\begin{equation}
\gamma _{rts}^h-\gamma _{rst}^h
=A_{sm}^hA_{rt}^m-A_{tm}^hA_{rs}^m.
\label{28bis}
\end{equation}

\end{lemma}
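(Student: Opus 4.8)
The plan is to derive \eqref{28bis} from equation \eqref{28} by a purely algebraic manipulation, using the two symmetries available on the symmetric bundle $C^{\mathrm{sym}}$: the symmetry of the fibre coordinates $A_{jk}^i=A_{kj}^i$, and the induced symmetry of the connection components in their first two lower indices, $\gamma_{jkl}^i=\gamma_{kjl}^i$ (the convention already used in the proof of Theorem \ref{one-to-one}). Recall that $(C_C)$ is locally equivalent to \eqref{28} together with \eqref{29}, and that on $C^{\mathrm{sym}}$ equation \eqref{29} is satisfied identically (see the Remark following Theorem \ref{th2}); hence all the information in $(C_C)$ is already carried by \eqref{28}. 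The point to keep in mind is that \eqref{28} expresses the failure of $\gamma_{\bullet\bullet\bullet}^h$ to be symmetric under the exchange of its first (connection) lower index and its third (base) index, whereas \eqref{28bis} expresses the analogous failure under the exchange of the second lower index and the third one; the two symmetries above are exactly what is needed to pass from one relation to the other.

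First I would rewrite $\gamma_{rts}^h=\gamma_{trs}^h$ by the first-two-index symmetry, and then apply \eqref{28} after the cyclic relabelling $s\mapsto t\mapsto r\mapsto s$, which turns \eqref{28} into
\[
\gamma_{trs}^h-\gamma_{srt}^h=A_{sm}^hA_{tr}^m-A_{tm}^hA_{sr}^m.
\]
Using $\gamma_{trs}^h=\gamma_{rts}^h$ and $\gamma_{srt}^h=\gamma_{rst}^h$, this becomes
\[
\gamma_{rts}^h-\gamma_{rst}^h=A_{sm}^hA_{tr}^m-A_{tm}^hA_{sr}^m,
\]
and a final application of $A_{tr}^m=A_{rt}^m$ and $A_{sr}^m=A_{rs}^m$ yields precisely \eqref{28bis}.

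The only delicate point is the index bookkeeping: one must verify that the substitution carrying \eqref{28} into the first displayed identity is indeed the intended cyclic permutation of $\{r,s,t\}$, and that each symmetrization $A_{jk}^i=A_{kj}^i$ is applied to the correct quadratic term. Beyond that the argument requires no further geometric input, since the equivalence of $(C_C)$ with \eqref{28}--\eqref{29} and the automatic vanishing of \eqref{29} on $C^{\mathrm{sym}}$ have both been established earlier.
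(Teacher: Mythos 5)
Your proof is correct and follows essentially the same route as the paper: both derive \eqref{28bis} purely algebraically from \eqref{28} together with the symmetries $\gamma_{abc}^h=\gamma_{bac}^h$ and $A_{jk}^i=A_{kj}^i$ on $C^{\mathrm{sym}}$. The only (cosmetic) difference is that you use a single cyclically relabelled instance of \eqref{28}, whereas the paper chains two applications of it; your index bookkeeping checks out.
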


\begin{proof}
As the bundle under consideration is that
of symmetric connections, the following symmetry holds:
$\gamma _{abc}^h=\gamma _{bac}^h$, and we have
\[
\begin{array}
[c]{rl}
\gamma _{rts}^h=\gamma _{str}^h-
\left(
A_{rm}^hA_{st}^m -A_{sm}^hA_{rt}^m
\right)  &
\left[
\text{by virtue of \eqref{28}}
\right] \\
=\gamma _{tsr}^h-
\left(
A_{rm}^hA_{st}^m-A_{sm}^hA_{rt}^m
\right)  &
\\
=\left(
\gamma _{rst}^h+A_{rm}^hA_{st}^m-A_{tm}^hA_{rs}^m
\right)  &
\left[
\text{by virtue of \eqref{28}}
\right] \\
-\left(
A_{rm}^hA_{st}^m-A_{sm}^hA_{rt}^m
\right)  & \\
=\gamma _{rst}^h+
\left(
A_{sm}^hA_{rt}^m-A_{tm}^hA_{rs}^m
\right)  &
\end{array}
\]
\end{proof}

\begin{proposition}
\label{retract}Let
\[
\zeta_N^2=
\left.
\zeta_N^{(1)}
\right
\vert _{J^2M}\colon J^2M\to J^1(M\times _NC^{\mathrm{sym}})
\]
be the restriction to the closed submanifold
$J^2M\subset J^1(J^1M)$ of the prolongation
$\zeta_N^{(1)}\colon J^1(J^1M)\to J^1(M\times _NC^{\mathrm{sym}})$
of the mapping $\zeta_N$ defined in
\emph{Theorem \ref{one-to-one}}. For every
$(j_x^1g,j_x^1\Gamma)\in  J^1(M\times _NC^{\mathrm{sym}})$
there exists a unique $j_x^2g^{\prime }\in  J_x^2M$ such that,
$j_x^1g^\prime =j_x^1g$ and
$j_x^1\Gamma^{g^\prime }=j_x^1\Gamma$ and the mapping
$\varkappa\colon J^1(M\times _NC^{\mathrm{sym}})\to J^2M$
defined by
$\varkappa(j_x^1g,j_x^1\Gamma)=j_x^2g^\prime $ is a
$\mathrm{Diff}N$-equivariant rectract of $\zeta_N^2$.
\end{proposition}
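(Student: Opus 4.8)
The plan is to work in the induced coordinates of Section \ref{Coordinate Systems} and to produce $\varkappa$ explicitly by inverting, at second order, the formulas defining $\zeta_N$. Recall from \eqref{y_circ_zeta^-1} that on the image of $\zeta_N$ one has $y_{ij,k}\circ\zeta_N^{-1}=y_{hi}A_{jk}^h+y_{hj}A_{ik}^h$ for $i\le j$. Since $J^2M$ sits inside $J^1(J^1M)$ as the holonomic sub-bundle, the coordinate $y_{ij,kl}$ is symmetric in $(k,l)$, and applying the total derivative $D_l$ to the previous identity yields
\begin{equation}
y_{ij,kl}=y_{hi,l}A_{jk}^h+y_{hi}A_{jk,l}^h+y_{hj,l}A_{ik}^h+y_{hj}A_{ik,l}^h,
\qquad i\le j.
\label{kappa-formula}
\end{equation}
I would take the right-hand side of \eqref{kappa-formula} as the defining expression for the second-order coordinates of $\varkappa$, and set $x^i\circ\varkappa=x^i$, $y_{ij}\circ\varkappa=y_{ij}$, $y_{ij,k}\circ\varkappa=y_{ij,k}$, so that $j_x^1g'=j_x^1g$ holds by construction.

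For existence and uniqueness, I would argue that once $j_x^1g'=j_x^1g$ is imposed, the only remaining freedom in $j_x^2g'$ is the collection $(y_{ij,kl})$. The $0$-order part of $j_x^1\Gamma^{g'}=j_x^1\Gamma$ reads $\Gamma^{g'}_x=\Gamma_x$, and since the Christoffel symbols at $x$ depend only on $j_x^1g$ this amounts to $\Gamma_x=\Gamma^g_x$, which holds on the image of $\zeta_N^2$; its $1$-order part is precisely the linear system $A_{jk,l}^i=\partial_l\Gamma^{g'}{}_{jk}^i$, whose unique solution is \eqref{kappa-formula}. The one point that needs checking is that the $y_{ij,kl}$ so produced are symmetric in $(k,l)$, i.e.\ that $\varkappa$ lands in $J^2M$; on the image this is immediate because there $A=\Gamma^g$ and the right-hand side of \eqref{kappa-formula} collapses to $\partial_l\partial_k y_{ij}$, so in general I would record the symmetrization of \eqref{kappa-formula} in $(k,l)$, which does not alter its values on the image.

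The retract property $\varkappa\circ\zeta_N^2=\mathrm{id}_{J^2M}$ I would then verify by direct substitution. Feeding $j_x^2g$ through $\zeta_N^2$ gives $A_{jk}^h=\Gamma^g{}_{jk}^h$ and $A_{jk,l}^h=\partial_l\Gamma^g{}_{jk}^h$, whereupon the right-hand side of \eqref{kappa-formula} becomes
\[
y_{hi,l}\Gamma_{jk}^h+y_{hi}\partial_l\Gamma_{jk}^h
+y_{hj,l}\Gamma_{ik}^h+y_{hj}\partial_l\Gamma_{ik}^h
=\partial_l\bigl(y_{hi}\Gamma_{jk}^h+y_{hj}\Gamma_{ik}^h\bigr)
=\partial_l\partial_k y_{ij},
\]
which is exactly the second-order coordinate of $j_x^2g$; together with the evident agreement in the lower-order coordinates this gives the identity on $J^2M$.

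Finally, for $\mathrm{Diff}N$-equivariance I would avoid coordinates and use the universal characterization together with the naturality of the Levi-Civita connection established before Theorem \ref{one-to-one}. Writing $\varkappa(j_x^1g,j_x^1\Gamma)=j_x^2g'$ and setting $\bar g=f_M\circ g'\circ f^{-1}$, functoriality of $1$-jets gives $j_{f(x)}^1\bar g=f_M^{(1)}(j_x^1g')=f_M^{(1)}(j_x^1g)$, while naturality of Levi-Civita gives $\Gamma^{\bar g}=\tilde f_C\circ\Gamma^{g'}\circ f^{-1}$, hence $j_{f(x)}^1\Gamma^{\bar g}=\tilde f_C^{(1)}(j_x^1\Gamma^{g'})=\tilde f_C^{(1)}(j_x^1\Gamma)$. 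Thus $j_{f(x)}^2\bar g$ satisfies the two conditions characterizing $\varkappa\bigl(\bar f^{(1)}(j_x^1g,j_x^1\Gamma)\bigr)$, and uniqueness forces $\varkappa\circ\bar f^{(1)}=f_M^{(2)}\circ\varkappa$. I expect the main obstacle to be the symmetry-in-$(k,l)$ (holonomicity) of the recovered second jet: this is exactly the step where the symmetric-connection hypothesis enters, and it is what distinguishes the image of $\zeta_N^2$ from a general point of $J^1(M\times_NC^{\mathrm{sym}})$.
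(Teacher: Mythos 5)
Your proposal is correct and is essentially the paper's own argument: the paper likewise determines the second derivatives of $g'$ by differentiating the metric-compatibility identity $\partial_k g'_{ij}=g'_{hi}(\Gamma^{g'})^h_{jk}+g'_{hj}(\Gamma^{g'})^h_{ik}$, obtaining exactly your expression $y_{ij,kl}=y_{hi,l}A^h_{jk}+y_{hi}A^h_{jk,l}+y_{hj,l}A^h_{ik}+y_{hj}A^h_{ik,l}$, and it deduces $\mathrm{Diff}N$-equivariance from the functoriality $\phi\cdot\Gamma^g=\Gamma^{\phi\cdot g}$ of the Levi-Civita connection. Your explicit verification of $\varkappa\circ\zeta_N^2=\mathrm{id}$ and the symmetrization in $(k,l)$ are points the paper leaves implicit, but they refine rather than alter the argument.
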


\begin{proof}
From the formulas \eqref{A_circ_zeta} and
\eqref{y_circ_zeta^-1} we obtain
\begin{align*}
\frac{\partial g_{ij}^\prime }{\partial x^k}
& =g_{hi}^\prime
\left(
\Gamma^{g^\prime }
\right)  _{jk}^h+g_{hj}^\prime
\left(
\Gamma ^{g^\prime }
\right)
_{ik}^h,\\
\left(
\Gamma^{g^\prime }
\right)  _{ij}^h
& =\tfrac{1}{2}g^{\prime hk}
\left(
\frac{\partial g_{ik}^\prime }{\partial x^j}
+\frac{\partial g_{jk}^\prime }{\partial x^i}
-\frac{\partial g_{ij}^\prime }{\partial x^k}
\right)
\end{align*}
for every non-singular metric $g^\prime $ on $N$.
Hence the second partial derivatives of $g_{ij}^\prime $
are completely determined, namely
\[
\frac{\partial^2g_{ij}^\prime }{\partial x^k\partial x^l}
=\frac{\partial g_{hi}}{\partial x^l}\Gamma _{jk}^h
+g_{hi}\frac{\partial\Gamma _{jk}^h}{\partial x^l}
+\frac{\partial g_{hj}}{\partial x^l}\Gamma _{ik}^h
+g_{hj}\frac{\partial\Gamma _{ik}^h}{\partial x^l}.
\]

Moreover, the Levi-Civita connection of a metric
depends functorially on the metric, i.e.,
$\phi\cdot\Gamma^g=\Gamma^{\phi\cdot g}$ for every
$\phi \in \mathrm{Diff}N$. Hence, by transforming
the equations $j_x^1g^\prime =j_x^1g$ and
$j_x^1\Gamma^{g^\prime }=j_x^1\Gamma^g$ by $\phi$
we can conclude.
\end{proof}

\begin{theorem}
\label{Theorem for P2}If a first-order Ehresmann
connection $\gamma$ on $M\times _NC^{\mathrm{sym}}$
satisfies the conditions $(C_M)$ and $(C_C)$ introduced
above, then the covariant Hamiltonian $\Lambda ^{\gamma ^2}$
attached to every $\mathrm{Diff}N$-invariant second-order
Lagrangian density $\Lambda $ on $M$ with respect
to the second-order Ehresmann connection $\gamma^2$
on $M$ defined in the formula \emph{\eqref{gamma2}},
is defined on $J^2M$ and it is also $\mathrm{Diff}N$-invariant.
\end{theorem}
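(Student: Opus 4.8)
The plan is to first cut the density down to $J^2M$ and then to reduce its invariance to the first-order problem already solved on $M\times _NC^{\mathrm{sym}}$. I would begin by reading off from Theorem \ref{one-to-one} that the component $\gamma _i^\alpha $ of $\gamma ^2$ attached to a metric index $\alpha =(ab)$ is $\gamma _{abi}\circ \zeta _N$; by Lemma \ref{lemma2} (which uses $(C_M)$) this equals $-y_{ab,i}=-y_i^\alpha $, whence $\gamma _i^\alpha +y_i^\alpha =0$. Consequently the term $(\gamma _i^\alpha +y_i^\alpha )L_\alpha ^{i0}$ in \eqref{f19} vanishes, and with it the sole order-raising contribution: by \eqref{f12} the coefficient $L_\alpha ^{i0}$ is exactly the one carrying the total derivative $D_j$ and a priori living on $J^3M$. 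What remains is $L^{\gamma ^2}=(\gamma _{hi}^\alpha +y_{hi}^\alpha )L_\alpha ^{ih}-L$, in which $L$ and $L_\alpha ^{ih}=\tfrac{1}{2-\delta _{ih}}\partial L/\partial y_{ih}^\alpha $ (see \eqref{f11}) are functions on $J^2M$, $\gamma _{hi}^\alpha \in C^\infty (J^1M)$, and $y_{hi}^\alpha $ is a $J^2M$-coordinate; hence $L^{\gamma ^2}$ is defined on $J^2M$, settling the first assertion.

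For the invariance I would pass to the first-order Lagrangian $\mathcal{L}=L\circ \varkappa $ on $J^1(M\times _NC^{\mathrm{sym}})$, where $\varkappa $ is the retract of Proposition \ref{retract}. Since $L$ is $\mathrm{Diff}N$-invariant and $\varkappa $ is $\mathrm{Diff}N$-equivariant, $\mathcal{L}$ is $\mathfrak{X}(N)$-invariant, and as $\gamma $ satisfies $(C_M)$ and $(C_C)$, Theorem \ref{th1} yields that $\mathcal{L}^\gamma =D^\gamma \mathcal{L}-\mathcal{L}$ is $\mathfrak{X}(N)$-invariant. The crux is then the identity $L^{\gamma ^2}=\mathcal{L}^\gamma \circ \zeta _N^2$. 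Because $\varkappa \circ \zeta _N^2=\mathrm{id}$ gives $\mathcal{L}\circ \zeta _N^2=L$, this amounts to $(D^\gamma \mathcal{L})\circ \zeta _N^2=(\gamma _{hi}^\alpha +y_{hi}^\alpha )L_\alpha ^{ih}$. On the image of $\zeta _N^2$ the coefficients $\gamma _{ijk}+y_{ij,k}$ of the $\partial /\partial y_{ij,k}$-part of $D^\gamma $ vanish, once more by Lemma \ref{lemma2}, so only the $\partial /\partial A_{jk,l}^i$-part survives; evaluating $\partial \mathcal{L}/\partial A_{jk,l}^i$ by the chain rule through the explicit expression for $y_{ij,kl}\circ \varkappa $ furnished in Proposition \ref{retract}, and inserting \eqref{gamma_ijkr_zeta^-1} and \eqref{gamma^s_ijr}, is meant to produce the right-hand side.

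Granting this identity the conclusion follows formally: $\zeta _N$, and therefore $\zeta _N^{(1)}$ and its restriction $\zeta _N^2$, is $\mathrm{Diff}N$-equivariant because the Levi-Civita connection is functorial (as used in Proposition \ref{retract}), so $(\zeta _N^2)_\ast X_M^{(2)}=\bar{X}^{(1)}\circ \zeta _N^2$ for every $X\in \mathfrak{X}(N)$; hence $X_M^{(2)}(L^{\gamma ^2})=(\bar{X}^{(1)}\mathcal{L}^\gamma )\circ \zeta _N^2=0$, and $L^{\gamma ^2}$ is $\mathfrak{X}(N)$- and thus $\mathrm{Diff}N$-invariant. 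I expect the main obstacle to be precisely the computational identity $(D^\gamma \mathcal{L})\circ \zeta _N^2=(\gamma _{hi}^\alpha +y_{hi}^\alpha )L_\alpha ^{ih}$: it requires careful control of the $\tfrac{1}{2-\delta }$ symmetrization factors of \eqref{f11}, of the $A_{jk,l}^i$-derivatives of $y_{ij,kl}\circ \varkappa $, and of the substitutions \eqref{gamma_ijkr_zeta^-1}, \eqref{gamma^s_ijr}, with the symmetry \eqref{28bis} of Lemma \ref{lemma3} serving to reconcile the distinct index orderings thrown up by the symmetric Christoffel expression.
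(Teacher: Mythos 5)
Your proposal follows essentially the same route as the paper: Lemma \ref{lemma2} kills the $(\gamma _i^\alpha +y_i^\alpha )L_\alpha ^{i0}$ term so that $L^{\gamma ^2}$ lives on $J^2M$, invariance is transferred via the retract $\varkappa $ and Theorem \ref{th1} through the key identity $L^{\gamma ^2}=\mathcal{L}^\gamma \circ \zeta _N^2$, and that identity is checked by the chain rule through $A_{kl,q}^h\circ \zeta _N^2$ together with \eqref{gamma_ijkr_zeta^-1}, \eqref{gamma^s_ijr}, \eqref{28} and \eqref{28bis}. The computational identity you flag as the main obstacle is exactly the one the paper reduces to (its equation \eqref{equation_bis}) and verifies with precisely the ingredients you list.
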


\begin{proof}
Given a $\mathrm{Diff}N$-invariant
second-order Lagrangian density
$\Lambda=\mathcal{L}\mathbf{v}$ on $M$, let
$\Lambda^\prime =\mathcal{L}^\prime \mathbf{v}$
be the first-order Lagrangian density on
$M\times _NC^{\mathrm{sym}}$ given by
$\Lambda^\prime =\varkappa^\ast \Lambda$,
which is also $\mathrm{Diff}N$-invariant as
$\varkappa$ is a $\mathrm{Diff}N$-equivariant mapping
according to Proposition \ref{retract}.
Moreover, as $\varkappa$ is a retract of $\zeta_N^2$,
we have
$\left(
\zeta _N^2
\right) ^\ast \Lambda^\prime
=\left(
\zeta_N^2
\right)
^\ast \varkappa^\ast \Lambda
=(\varkappa\circ\zeta_N^2)^\ast \Lambda=\Lambda$,
i.e.,
$\Lambda=\left(
 \zeta_N^2
 \right) ^\ast \Lambda^\prime $.
This formula is equivalent to saying
$\mathcal{L}=\mathcal{L}^\prime \circ\zeta_N^2$, as the
$n$-form $\mathbf{v}$ is $\mathrm{Diff}N$-invariant,
and it is even equivalent to
$L=L^\prime \circ\zeta_N^2$ because
$\zeta _N^2$ induces the identity on $N$.

We claim $\mathcal{L}^{\gamma^2}=
\left( \mathcal{L}^\prime
\right) ^\gamma \circ\zeta_N^2$.
This formula will end the proof as the mapping
$\zeta_N^2$ is $\mathrm{Diff}N$-equivariant and
$\left(  \mathcal{L}^\prime \right) ^\gamma $ is
$\mathrm{Diff}N$-invariant by virtue of Theorem \ref{th1}.

To start with, we observe that the formula \eqref{f11}
for $\Lambda $ can be written, in the present case,
as follows:
\[
L^{abij}=\tfrac{1}{2-\delta_{ij}}
\frac{\partial L}{\partial y_{ab,ij}},
\]
or equivalently, letting
$\mathcal{L}^{abij}=\rho^{-1}L^{abij}$,
\begin{equation}
\mathcal{L}^{abij}=\tfrac{1}{2-\delta_{ij}}
\frac{\partial\mathcal{L}}{\partial y_{ab,ij}}.
\label{L^abij}
\end{equation}
Taking the formula in Lemma \ref{lemma2}
into account, the formula \eqref{f19}
for $\Lambda $ reads as
$L^{\gamma ^2}=\sum _{a\leq b}(\gamma _{abij}
+y_{ab,ij})L^{abij}-L$,
or even
\[
\mathcal{L}^{\gamma^2}
=\sum_{a\leq b}(\gamma _{abij}
+y_{ab,ij})\mathcal{L}^{abij}-\mathcal{L},
\]
where
$\mathcal{L}^{\gamma ^2}=\rho ^{-1}L^{\gamma ^2}$.
Hence $\mathcal{L}^{\gamma ^2}$ is defined over
$J^2M$. As $y_{ab,ij}=y_{ab,ji}$, we obtain
\begin{multline*}
\mathcal{L}^{\gamma ^2}
\!=\!
\sum_{a\leq b}\sum_{i\leq j}
\left(
\tfrac{1}{2}
\left(
\gamma _{abij}+\gamma _{abji}
\right)
+y_{ab,ij}
\right)
\frac{\partial \left( \mathcal{L}^\prime \circ\zeta_N^2\right) }
{\partial y_{ab,ij}}
-\mathcal{L}^\prime \circ \zeta _N^2\\
\!=\! \sum _{a\leq b}\sum_{i\leq j}\sum_{k\leq l}
\left(
\tfrac{1}{2}
\left(
\gamma _{abij}+\gamma _{abji}
\right)
+y_{ab,ij}
\right)
\left(
\!\frac{\partial\mathcal{L}^\prime }
{\partial A_{kl,q}^h}
\circ\zeta _N^2\!
\right)
\frac{\partial(A_{kl,q}^h\circ \zeta _N^2)}
{\partial y_{ab,ij}}
-\mathcal{L}^\prime \circ \zeta _N^2\\
=\sum_{k\leq l}\tfrac{1}{4}y^{hm}
\left(
\gamma _{kmql}+\gamma _{kmlq}
+\gamma _{lmqk}+\gamma _{lmkq}
-\gamma _{klqm}-\gamma _{klmq}
\right)
\left(
\frac{\partial\mathcal{L}^\prime }
{\partial A_{kl,q}^h}
\circ \zeta _N^2
\right) \\
+\sum_{k\leq l}
\tfrac{1}{2}y^{hm}
\left(
y_{km,ql}+y_{lm,qk}-y_{kl,qm}
\right)
\left(
\frac{\partial\mathcal{L}^\prime }
{\partial A_{kl,q}^h}
\circ \zeta_N^2
\right)
-\mathcal{L}^\prime \circ \zeta _N^2.
\end{multline*}
\newline Moreover, we have
\[
\left(
\mathcal{L}^\prime
\right) ^\gamma
=\sum_{a\leq b}
\left(
\gamma _{abc}+y_{ab,c}
\right)
\frac{\partial \mathcal{L}^\prime }
{\partial y_{ab,c}}
+\sum_{a\leq b}
\left(
\gamma _{abl}^i+A_{ab,l}^i
\right)
\frac{\partial \mathcal{L}^\prime }
{\partial A_{ab,l}^i}
-\mathcal{L}^\prime .
\]
Hence
\begin{align*}
\left(
\mathcal{L}^\prime
\right)  ^\gamma \circ\zeta_N^2
&=\sum_{k\leq l}
\left(
\gamma _{klq}^h\circ\zeta_N
+A_{kl,q}^h\circ \zeta_N
\right)
\left(
\frac{\partial\mathcal{L}^\prime }
{\partial A_{kl,q}^h}
\circ \zeta _N^2
\right)
-\mathcal{L}^\prime \circ\zeta
_N^2\\
& =\sum_{k\leq l}
\left\{
-\tfrac{1}{2}
\left(
\gamma _{klrq}-\gamma _{lrkq}
-\gamma _{rklq}
\right)
y^{rh}
\right. \\
& \left.
+\tfrac{1}{2}
\left(
y_{kr,lq}+y_{lr,kq}-y_{kl,rq}
\right)
y^{hr}
\right\}
\left(
\frac{\partial\mathcal{L}^\prime }
{\partial A_{kl,q}^h}
\circ \zeta _N^2
\right)
-\mathcal{L}^\prime \circ \zeta _N^2.
\end{align*}

Consequently, the proof reduces to state
that the following equation
\[
\tfrac{1}{4}
\left(
\gamma _{krql}+\gamma _{krlq}+\gamma _{lrqk}
+\gamma _{lrkq}-\gamma _{klqr}-\gamma _{klrq}
\right)
=-\tfrac{1}{2}
\left(
\gamma _{klrq}-\gamma _{lrkq}-\gamma _{rklq}
\right)
\]
holds true, or equivalently,
\begin{equation}
0=\left(
\gamma _{ijkr}-\gamma _{ijrk}
\right)
+\left(
\gamma _{irjk}-\gamma _{irkj}
\right)
+\left(
\gamma _{rjki}-\gamma _{rjik}
\right) .\label{equation_bis}
\end{equation}
According to the formulas
\eqref{gamma_ijkr_zeta^-1}
and \eqref{gamma_klj} we obtain
\begin{align*}
\gamma_{ijkr}\circ \zeta _N^{-1}
& =\left(
\gamma _{jkr}^h-A_{ra}^{h}A_{jk}^a
\right)
y_{hi}
+\left(
\gamma _{ikr}^h-A_{ra}^{h}A_{ik}^a
\right)
y_{hj} \\
& -\left(
A_{rj}^hA_{ik}^a+A_{ri}^{h}A_{jk}^a
\right)  y_{ah}.
\end{align*}
The third term on the right-hand side
of this equation is symmetric
in the indices $k$ and $r$,
as $A_{bc}^a=A_{cb}^a$. Hence
\begin{align*}
\left(
\gamma _{ijkr}-\gamma _{ijrk}
\right)
\circ\zeta_N^{-1}
& =\left(
\gamma _{jkr}^h
-\gamma _{jrk}^h
-A_{ra}^hA_{jk}^a
+A_{ka}^hA_{jr}^a
\right)
y_{hi}\\
& +
\left(
\gamma _{ikr}^h-\gamma _{irk}^h
-A_{ra}^hA_{ik}^a +A_{ka}^hA_{ir}^a
\right)
y_{hj}.
\end{align*}
By composing the right-hand side of the equation
\eqref{equation_bis} and $\zeta _N^{-1}$,
and taking the previous formula and the formulas
\eqref{28} and \eqref{28bis} into account,
we conclude that this expression vanishes indeed.
\end{proof}

\section{Palatini and Einstein-Hilbert Lagrangians}

Let us compute the covariant Hamiltonian
density attached to the Palatini Lagrangian.
Following the notations in \cite{KN},
the Ricci tensor field attached to
the symmetric connection $\Gamma$ is given by
$S^\Gamma(X,Y)
=\operatorname*{tr}(Z\mapsto R^\Gamma (Z,X)Y)$,
where $R^\Gamma $ denotes the curvature tensor
field of the covariant derivative
$\nabla ^\Gamma $ associated to $\Gamma $
on the tangent bundle; hence
$S^\Gamma=(R^\Gamma )_{jl}dx^l\otimes dx^j$,
where
\begin{align*}
(R^\Gamma )_{jl}
& =(R^\Gamma )_{jkl}^k,\\
(R^\Gamma )_{jkl}^i
& =\partial \Gamma _{jl}^i/\partial x^k
-\partial \Gamma _{jk}^i/\partial x^l
+\Gamma _{jl}^m\Gamma _{km}^i
-\Gamma _{jk}^m\Gamma _{lm}^i.
\end{align*}
The Lagrangian is the function on
$J^1(M\times _NC^{\mathrm{sym}})$ thus given by,
\[
\mathcal{L}_P(j_x^1g,j_x^1\Gamma )
=g^{ij}(x)(R^\Gamma )_{ij}(x)
\]
and local expression
\[
\mathcal{L}_{P}=y^{ij}(A_{ij,k}^k-A_{ik,j}^k
+A_{ij}^mA_{km}^k-A_{ik}^mA_{jm}^k) .
\]
As a computation shows, for every first-order
connection $\gamma $ on $M\times _NC^{\mathrm{sym}}$
satisfying \eqref{28bis} and taking the formula
\eqref{Lambdagamma1} into account, we obtain
$\mathcal{L}_P^\gamma =0$. This result is essentially
due to the fact that the P-C form of the P density
$\Lambda_P=\mathcal{L}_P\mathbf{v}=L_Pv_n$
projects onto $M\times _NC^{\mathrm{sym}}$. In fact,
the following general characterization holds:

\begin{proposition}
Let $p\colon E\to N$ be an arbitrary fibred
manifold and let $\gamma$ be a first-order
Ehresmann connection on $E$. The equation
$L^\gamma =0$ holds true for a Lagrangian
$L\in  C^\infty (J^1E)$ if and only if,
\emph{i)} the Poincar\'{e}-Cartan form
of the density $\Lambda=Lv_n$ projects onto
$J^{0}E$ and,
\emph{ii)}
$L=\left\langle
(p_0^1)^\ast \gamma-\theta,dL|_{V(p_0^1)}
\right\rangle $.
\end{proposition}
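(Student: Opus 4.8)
The plan is to run the whole argument through the Poincar\'e--Cartan decomposition recorded in the Introduction,
\[
\Theta_\Lambda=(p_0^1)^\ast\gamma\wedge\omega_\Lambda-\Lambda^\gamma .
\]
Since $v_n$ is nowhere zero, $L^\gamma=0$ is the same as $\Lambda^\gamma=0$, and the decomposition turns this into the single identity $\Theta_\Lambda=(p_0^1)^\ast\gamma\wedge\omega_\Lambda$, which I shall call $(\ast)$. I would then read $(\ast)$ off in the contact bigrading of $J^1E$ attached to the projection $p_0^1$, isolating its contact--degree--zero part (which yields (ii)) from its contact--degree--one part (which governs (i)).

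To obtain (ii) I would compute the horizontal, contact--degree--zero, part of $(\ast)$. Writing $(p_0^1)^\ast\gamma=\theta+\bigl((p_0^1)^\ast\gamma-\theta\bigr)$ and using that $\bigl((p_0^1)^\ast\gamma-\theta\bigr)\wedge\omega_\Lambda$ is the horizontal $n$-form $(\gamma_i^\alpha+y_i^\alpha)(\partial L/\partial y_i^\alpha)\,v_n$, the contact--zero balance of $(\ast)$ reads $L=(\gamma_i^\alpha+y_i^\alpha)\partial L/\partial y_i^\alpha$, i.e. the local expression \eqref{Lambdagamma1}. By Remark~\ref{remark1}, identifying $(p_0^1)^\ast\gamma-\theta$ with the $p_0^1$-vertical field $D^\gamma$ of \eqref{Dgamma} and $\partial L/\partial y_i^\alpha$ with $dL|_{V(p_0^1)}$, this is exactly condition (ii). A short check shows that the contact--degree--one parts of both sides of $(\ast)$ coincide with $\theta\wedge\omega_\Lambda$, and so match automatically; hence $(\ast)$ is equivalent to (ii), and the implication from (i) and (ii) to $L^\gamma=0$ follows at once (indeed from (ii) alone). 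This settles the easy direction by reassembling the two homogeneous parts.

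For (i) I would argue that the Poincar\'e--Cartan form projects onto $J^0E$ precisely when the Legendre form $\omega_\Lambda$ is basic, i.e. when the momenta $\partial L/\partial y_i^\alpha$ descend to $E$; equivalently, when $L$ is affine in the jet coordinates $y_i^\alpha$. Under that hypothesis the pairing in (ii) becomes an equation on $E$, recovering the Palatini situation, where $\mathcal{L}_P$ is affine in the derivatives of the connection so that its momenta are manifestly basic and $(\ast)$ holds.

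The hard part will be the forward implication toward (i): deducing projectability of $\Theta_\Lambda$ from $L^\gamma=0$. The difficulty is structural, because the contact--one parts of $(\ast)$ coincide identically and so impose no constraint on the velocity--dependence of $\partial L/\partial y_i^\alpha$. Differentiating the scalar identity $L^\gamma=0$ with respect to $y_j^\beta$ only produces $D^\gamma(\partial L/\partial y_i^\alpha)=0$---degree--zero homogeneity of each momentum along $D^\gamma$---which is strictly weaker than basicness. I would therefore treat (i) as the decisive step: either restrict to the class of Lagrangians whose Poincar\'e--Cartan form projects onto $J^0E$ (the setting of the Palatini and Einstein--Hilbert examples), for which (i) holds by hypothesis and (ii) is the substantive condition, or invoke the additional structure in force to upgrade the degree--zero homogeneity of the momenta to genuine projectability.
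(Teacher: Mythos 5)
Your reduction of the statement to the single identity $D^\gamma L=L$, i.e.\ $L=(\gamma_i^\alpha+y_i^\alpha)\,\partial L/\partial y_i^\alpha$, and your observation that this identity is precisely condition (ii) and already implies $L^\gamma=0$, agree with the paper. Your characterization of (i) as ``the momenta $\partial L/\partial y_i^\alpha$ are basic, equivalently $L$ is affine in the $y_i^\alpha$ with basic coefficients'' is also correct. But the forward implication towards (i), which you explicitly leave open (proposing either to add (i) as a hypothesis or to invoke unspecified ``additional structure''), is exactly the substance of the proposition, so the proposal has a genuine gap there.

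The missing idea is elementary and is what the paper uses: in the shifted fibre coordinates $z_i^\alpha=\gamma_i^\alpha+y_i^\alpha$ (the $\gamma_i^\alpha$ being functions on $E$ only), $D^\gamma=z_i^\alpha\,\partial/\partial z_i^\alpha$ is the radial (Euler) vector field of the affine fibres of $p_0^1$ centred at $z=0$. The relation $D^\gamma L=L$ is Euler's identity, so $L$ is positively homogeneous of degree one in $z$ along each fibre; since $L$ is smooth on the \emph{entire} fibre, including the centre $z=0$, it must be linear there, $L=L_\alpha^i(x,y)\,z_i^\alpha$ with basic coefficients. Equivalently, your derived relation $D^\gamma(\partial L/\partial y_j^\beta)=0$ says each momentum is homogeneous of degree zero; your claim that this is ``strictly weaker than basicness'' is true only on the punctured fibre, for continuity at $z=0$ gives $\partial L/\partial y_j^\beta(z)=\lim_{t\to 0^+}\partial L/\partial y_j^\beta(tz)=\partial L/\partial y_j^\beta(0)$, i.e.\ the momenta are constant on each fibre, hence defined on $E$. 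This yields $L=L_\alpha^i(x,y)\,y_i^\alpha+L_0(x,y)$, which is precisely the projectability of $\Theta_\Lambda$ onto $J^0E$ (condition (i)), and substituting back into $D^\gamma L=L$ gives $L_0=L_\alpha^i\gamma_i^\alpha$, recovering (ii). With this one observation your argument closes and coincides with the paper's proof.
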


\begin{proof}
The equation $L^\gamma =0$ is equivalent
to the equation $D^\gamma L=L$, where $D^\gamma $
is the $p_0^1$-vertical vector field defined in the
formula \eqref{Dgamma}, and the general solution
to the latter is
$L=f(x^i,y^\alpha ,\gamma _i^\alpha +y_i^\alpha )$,
$f(x^i,y^\alpha ,y_i^\alpha )$ being a homogeneous
smooth function of degree one in the variables
$(y_i^\alpha )$, $1\leq \alpha \leq m$, $1\leq i\leq n$,
according to Euler's homogeneous function theorem.
As $f$ is defined for all values of the variables
$(y_i^\alpha )$, $1\leq \alpha \leq m$, $1\leq i\leq n$,
we conclude that the functions
$L_\alpha ^i=\partial L/\partial y_i^\alpha $
must be defined on $E$. Hence $L$ is written as
$L=L_\alpha ^i(x^j,y^\beta )y_i^\alpha +L_0(x^j,y^\beta )$,
but this is exactly the condition for the P-C form of
$\Lambda $ to be projectable onto $J^0E=E$,
as follows from the local expression of this form, namely,
\begin{align*}
\Theta _\Lambda
& =\frac{\partial L}{\partial y_i^\alpha }
\theta ^\alpha \wedge i_{\partial/\partial x^i}v_n
+Lv_n\\
& =\frac{\partial L}{\partial y_i^\alpha }dy^\alpha
\wedge i_{\partial /\partial x^i}v_n
+\left(
L-y_i^\alpha
\frac{\partial L}{\partial y_i^\alpha }
\right)
v_n.
\end{align*}

Moreover, by imposing the condition
$D^\gamma L=L$ we obtain
$L_0=L_\alpha^i\gamma _i^\alpha $,
or in other words
$L=(\gamma _i^\alpha+y_i^\alpha )
\partial L/\partial y_i^\alpha $,
which is equivalent to the equation ii)
in the statement.
\end{proof}

The corresponding result for the second-order
formalism is similar but the computations
are more cumbersome. Let us compute the covariant
Hamiltonian density attached to the Einstein-Hilbert
Lagrangian. As a matter of notation, we set
$S^g(X,Y)=S^{\Gamma^g}(X,Y)$ for the metric $g$,
$\Gamma^g$ being its Levi-Civita connection,
and similarly, $(R^g)_{jkl}^i=(R^{\Gamma^g})_{jkl}^i$.

The E-H Lagrangian is thus given by
$\mathcal{L}_{EH}\circ j^2g =(y^{ij}\circ g)(R^g)_{ihj}^h$.
As the Levi-Civita connection $\Gamma ^g$ depends
functorially on $g$, $\mathcal{L}_{EH}$
is readily seen to be $\mathrm{Diff}N$-invariant;
it is in addition linear in the second-order variables
$y_{ij,kl}$. By using the third formula in
\eqref{A_circ_zeta} the following local expression
for $\mathcal{L}_{EH}$ is obtained:
\[
\mathcal{L}_{EH}=\tfrac{1}{2}y^{ij}y^{hd}
\left(
y_{dj,hi}-y_{ij,dh}
-y_{dh,ij}+y_{hi,dj}
\right)
+\mathcal{L}_{EH}^\prime ,
\]
\begin{align*}
\mathcal{L}_{EH}^\prime
& =\tfrac{1}{2}y^{ij}
\left\{
y^{hm}y_{mr,j}y^{rd}
\left(
y_{id,h}+y_{hd,i}-y_{ih,d}
\right)
\right. \\
& -y^{hm}y_{mr,h}y^{rd}\left(
y_{id,j}+y_{jd,i}-y_{ij,d}\right) \\
& +\tfrac{1}{2}y^{hr}y^{md}
\left(
y_{id,j}+y_{jd,i}-y_{ij,d}
\right)
\left(
y_{hr,m}+y_{mr,h}-y_{hm,r}
\right) \\
& \left.
-\tfrac{1}{2}y^{hr}y^{md}
\left(
y_{id,h}+y_{hd,i}-y_{ih,d}
\right)
\left(
y_{jr,m}+y_{mr,j}-y_{jm,r}
\right)
\right\} .
\end{align*}
According to \eqref{L^abij}, for every first-order
connection form $\gamma $ on $M\times _NC^{\mathrm{sym}}$
satisfying the conditions $(C_M)$ and $(C_C)$ above,
we have
\[
\mathcal{L}_{EH}^{\gamma^2}
=\sum_{a\leq b}\tfrac{1}{2-\delta_{ij}}
(\gamma _{abij}+y_{ab,ij})
\frac{\partial\mathcal{L}_{EH}}{\partial y_{ab,ij}}
-\mathcal{L}_{EH},
\]
and as a computation shows,
\begin{align*}
\mathcal{L}_{EH}^{\gamma^2}
& =\tfrac{1}{2}y^{ij}
\left(
\gamma _{idjh}
+\gamma _{jdih}
-\gamma _{ijdh}
-\gamma _{idhj}
-\gamma _{hdij}
+\gamma _{ihdj}
\right)  y^{hd}\\
& +\tfrac{1}{2}y^{ij}
\left\{
y^{hm}y_{mr,h}y^{rd}
\left(
y_{id,j}+y_{jd,i}-y_{ij,d}
\right)
\right. \\
& -y^{hm}y_{mr,j}y^{rd}
\left(
y_{id,h}+y_{hd,i}-y_{ih,d}
\right) \\
& -\tfrac{1}{2}y^{hr}y^{md}
\left(
y_{id,j}+y_{jd,i}-y_{ij,d}
\right)
\left(
y_{hr,m}+y_{mr,h}-y_{hm,r}
\right) \\
& \left.
+\tfrac{1}{2}y^{hr}y^{md}
\left(
y_{id,h}+y_{hd,i}-y_{ih,d}
\right)
\left(
y_{jr,m}+y_{mr,j}-y_{jm,r}
\right)
\right\} \\
& =0,
\end{align*}
where the formulas \eqref{gamma^s_ijr},
\eqref{28bis}, \eqref{A_circ_zeta},
and Lemma \ref{lemma3} have been used.
In this case, the P-C form of the E-H
density
$\Lambda_{EH}=\mathcal{L}_{EH}\mathbf{v}=L_{EH}v_n$,
\begin{align}
\Theta_{\Lambda_{EH}}
& =\sum\nolimits_{k\leq l}
\left(
 L_{EH}^{i,kl}dy_{kl}+L_{EH}^{ij,kl}dy_{kl,j}
\right)
\wedge i_{\partial/\partial x^i}v_n+Hv_n,
\label{Theta_Lambda_EH}\\
H  & =L_{EH}^\prime
-\sum\nolimits_{k\leq l}L_{EH}^{i,kl}y_{kl,i},
\nonumber\\
L_{EH}^{i,kl}
& =\frac{\partial L_{EH}^\prime }{\partial y_{kl,i}}
-\tfrac{1}{2-\delta _{ij}}y_{ab,j}
\frac{\partial^2L_{EH}}
{\partial y_{ab}\partial y_{kl,ij}},
\nonumber\\
L_{EH}^{ij,kl}
& =\tfrac{1}{2-\delta_{ij}}
\frac{\partial L_{EH}}{\partial y_{kl,ij}},
\nonumber
\end{align}
(cf.\ \eqref{f11}, \eqref{f12}) is not only
projectable onto $J^2M$ but also on $J^1M$
(e.g., see \cite{GM1}), although there is
no first-order Lagrangian on $J^1M$ admitting
\eqref{Theta_Lambda_EH} as its P-C form.
This fact is strongly related to a classical
result by Hermann Weyl (\cite[Appendix II]{Weyl},
also see \cite{Lovelock}, \cite{Heyde}) according
to which the only $\mathrm{Diff}N$-invariant
Lagrangians on $J^2M$ depending linearly
on the second-order coordinates
$y_{ab,ij}$ are of the form
$\lambda\mathcal{L}_{EH}+\mu$,
for scalars $\lambda $, $\mu $.
This also explains why a true first-order
Hamiltonian formalism exists in the Einstein-Cartan
gravitation theory, e.g., see \cite{Szczyrba1},
\cite{Szczyrba2}. In fact, if
\[
L_{EH}^i=\tfrac{1}{2-\delta_{ij}}
\frac{\partial L_{EH}}{\partial y_{kl,ij}}
y_{kl,j}
\quad
\left(
\text{hence \ }L_{EH}^{ij,kl}
=\frac{\partial L_{EH}^i}{\partial y_{kl,j}}
\right)
\]
and the momentum functions are defined as follows:
\[
p_{kl,i}=L_{EH}^{i,kl}-
\frac{\partial L_{EH}^i}{\partial y_{kl}},
\]
then
\[
d\Theta_{\Lambda_{EH}}
=dp_{kl,i}\wedge dy_{kl}\wedge
i_{\partial/\partial x^i}v_n
+dH\wedge v_n,
\]
and from the Hamilton-Cartan equation
(e.g., see \cite[(1)]{GM1}) we conclude
that a metric $g$ is an extremal for
$\Lambda_{EH}$ if and only if,
\begin{align*}
0  & =\frac{\partial(p_{ab,i}\circ j^1g)}
{\partial x^i}
-\frac{\partial H}{\partial y_{ab}}\circ j^1g,
\\
0  & =\frac{\partial(y_{ab}\circ g)}{\partial x^i}
+\frac{\partial H}{\partial y_{ab,i}}\circ j^1g.
\end{align*}

On the other hand, it is no longer true that the
covariant Hamiltonians of the non-linear Lagrangians
of the form $f(\mathcal{L}_{EH})$,
$f^{\prime\prime }\neq 0$, considered
in some cosmological models (e.g., see
\cite{BorowiecFerrarisFrancavigliaVolovich},
\cite{CotsakisMiritzisQuerella}, \cite{DuruisseauKerner},
\cite{Flanagan}, \cite{Kerner}, \cite{KoivistoKurki},
\cite{Poplawski}) and those in higher dimensions (e.g.,
see \cite{GiorginiKerner}, \cite{Shahid-Saless}) vanish.
In fact, as a computation shows, one has
$f(\mathcal{L}_{EH}\mathcal{)}^{\gamma^2}
=f^\prime (\mathcal{L}_{EH})\mathcal{L}_{EH}
-f(\mathcal{L}_{EH})$,
$\forall f\in  C^\infty (\mathbb{R})$.

\bibliographystyle{my-h-elsevier}

\end{document}